\theoremstyle{plain}
\newtheorem{theorem}{Theorem}[section]
\newtheorem{proposition}[theorem]{Proposition}
\newtheorem{lemma}[theorem]{Lemma}
\newtheorem*{lemma*}{Lemma}
\theoremstyle{definition}
\newtheorem{definition}[theorem]{Definition}
\providecommand{\algorithmname}{Algorithm}
\DeclareMathOperator{\median}{median}
\newcommand\td[1]{\widetilde{#1}}
\newcommand\ha[1]{\widehat{#1}}
\newcommand{\diff}{\mathrm{d}}
\newcommand{\eps}{\epsilon}
\newcommand{\rn}{\{0,1\}}
\newcommand{\bfX}{{\mathbf{X}}}
\newcommand{\bfM}{\mathbf{M}}
\newcommand{\calA}{\mathcal{A}}
\newcommand{\calB}{\mathcal{B}}
\newcommand{\calX}{\mathcal{X}}
\newcommand{\calY}{\mathcal{Y}}
\newcommand{\gen}{{\mathrm{Gen}}}
\providecommand\given{} % use \given to condition on an event
\DeclarePairedDelimiterX{\pt}[1](){\renewcommand\given{\nonscript\:\delimsize\vert\nonscript\:\mathopen{}}#1} % ()
\DeclarePairedDelimiterX{\bc}[1][]{\renewcommand\given{\nonscript\:\delimsize\vert\nonscript\:\mathopen{}}#1} % []
\DeclarePairedDelimiter{\ceil}{\lceil}{\rceil}
\DeclarePairedDelimiter{\abs}{\lvert}{\rvert}
\DeclarePairedDelimiter{\set}{\{}{\}}
\DeclarePairedDelimiter{\ket}{|}{\rangle}
\def \bP {\mathbb{P}}
\NewDocumentCommand{\ex}{e{_}}{\mathbb{E} \IfValueT{#1}{_{#1}} \bc}
\NewDocumentCommand{\pr}{e{_}}{\Pr \IfValueT{#1}{_{#1}} \bc}
\NewDocumentCommand{\var}{e{_}}{\mathrm{Var} \IfValueT{#1}{_{#1}} \bc}
\newcommand{\bo}{O\pt}
\newcommand{\wbo}{\widetilde{O}\pt}
\newcommand{\om}{\Omega\pt}
\title{Quantum speedup of non-linear Monte Carlo problems}
\author{
  Jose Blanchet \\
  Stanford University \\
  \texttt{jose.blanchet@stanford.edu} \\
   \And
  Yassine Hamoudi \\
  Université de Bordeaux, CNRS, LaBRI \\
  \texttt{ys.hamoudi@gmail.com} \\
  \And
  Mario Szegedy \\
  Rutgers University \\
  \texttt{szegedy@cs.rutgers.edu} \\
  \And
  Guanyang Wang \\
  Rutgers University \\
  \texttt{guanyang.wang@rutgers.edu}
}
\begin{document}

\maketitle

\begin{abstract}
    The mean of a random variable can be understood as a \textit{linear} functional on the space of probability distributions. Quantum computing is known to provide a quadratic speedup over classical Monte Carlo methods for mean estimation. In this paper, we investigate whether a similar quadratic speedup is achievable for estimating \textit{non-linear} functionals of probability distributions. We propose a \textit{quantum-inside-quantum} algorithm that achieves this speedup for the broad class of nonlinear estimation problems known as nested expectations. Our algorithm improves upon the direct application of the quantum-accelerated multilevel Monte Carlo algorithm introduced by \cite{an2021quantum}. The existing lower bound indicates that our algorithm is optimal up to polylogarithmic factors. A key innovation of our approach is a new sequence of multilevel Monte Carlo approximations specifically designed for quantum computing, which is central to the algorithm's improved performance.
\end{abstract}

%#########################################################################
%#########################################################################

\section{Introduction}

From classic problems like Buffon's needle \citep{ramaley1969buffon} to modern Bayesian computations \citep{martin2023computing}, Monte Carlo methods have proven to be powerful tools for estimating the expected value of a given function. Specifically, the classical Monte Carlo method involves estimating $\ex_{\bP}{f(\mathbf{X})} = \int f(\mathbf{x}) \, \diff\bP(\mathbf{x})$ by sampling, where the samples are drawn from $\bP$. Assuming $f$ has finite variance, the number of independent and identically distributed (i.i.d.) samples required to produce an estimate of $\ex_{\bP}{f(\bfX)}$ with an additive error of $\eps$ and a given degree (say 95\%) of confidence is~$\bo{1/\eps^2}$ \footnote{Throughout this paper, we use $f(n) = \bo{g(n)}$ to mean $f(n) \leq M g(n)$ for all $n$ and some constant $M$. We write $f(n) = \wbo{g(n)}$ to ignore polylogarithmic factors, meaning $f(n) = \bo{g(n) \log^k g(n)}$ for some $k$.}. In the quantum computing setting, using a Grover-type algorithm, it is known \citep{Hei02j,montanaro2015quantum,hamoudimagniez19,hamoudi2021gaussian,kothari2023mean} that a quantum-accelerated version of the Monte Carlo technique achieves a quadratic speedup, resulting in a cost of 
$\bo{1/\eps}$ for the same task. For broader links between quantum algorithms and classical stochastic simulation, see \cite{blanchet2025connecting}.

A typical expectation $\ex_{\bP}{f(\mathbf{X})}$ functions as a linear functional on the space of distributions, mapping one distribution $\bP$ to its corresponding expected value. However, many important quantities cannot be written as such linear functionals. For instance, the standard deviation maps a distribution~$\mathbb P$ to $\text{sd}(\bP) := \sqrt{\ex_{\bP}{\bfX^2} - (\ex_{\bP}{\bfX})^2}$. This mapping is nonlinear because the expectations are subsequently transformed by quadratic and square-root operations.

This work addresses settings where the target quantity is a \emph{non-linear} functional of $\bP$. Our goal is to estimate the \emph{nested expectation} introduced in the next section. Our main contribution is a \emph{quantum-inside-quantum algorithm} that achieves a near-optimal cost of $\wbo{1/\eps}$.

%#########################################################################

\subsection{Nested expectation}\label{subsec: nested expectation}

We now establish the notation that will be used throughout the rest of the paper. Let $\bP$ denote a probability distribution on $\calX \times \calY$. Define a function $\phi: \calX \times \calY \rightarrow \mathbb{R}$ and another function $g: \calX \times \mathbb{R} \rightarrow \mathbb{R}$. Let $(X, Y) \in \calX\times\calY$ be a pair of random variables with joint distribution~$\bP$. Our paper addresses the estimation of the \textbf{nested expectation}, which takes the form:
    \begin{equation}
        \label{eqn:nested expectation}
        \ex_X*{g(X,\ex_{Y \mid X}{\phi(X,Y)})}.
    \end{equation}

\Cref{eqn:nested expectation} defines a non-linear function of $\bP$, with the non-linearity arising from the non-linear nature of $g$. If we set $\gamma(x) := \ex_{Y \mid X=x}{\phi(x,Y)}$ and $\lambda(x) := g(x,\gamma(x))$, the expression \eqref{eqn:nested expectation} can be written more simply as $\ex{\lambda(X)}$. However, it is more challenging than the standard mean estimation problem as $\lambda(x)$ further depends on a conditional expectation  $\gamma(x)$ that needs to be estimated. Consequently, a standard Monte Carlo method---which depends on computing $\lambda$ exactly---is not directly applicable in this context.

The phrase ``nested expectation'', referring to an expectation taken inside another expectation, was formally introduced and defined by \cite{rainforth2018nesting}. It also represents the simplest nontrivial case of ``repeated nested expectation'' \citep{zhou2023unbiased, syed2023optimal,haji2023nested}. Some concrete applications are as follows: 
\begin{itemize}[leftmargin=*]
    \item In Bayesian experiment design \citep{lindley1956measure,hironaka2023efficient}, the expected information loss of the random variable $Y$ is $\ex_X{\log(\ex_Y{\pr{X \given Y}})} - \ex_Y{\ex_{X \mid Y}{\log(\pr{X \given Y})}}$. Here, a nested expectation appears in the first term. In typical scenarios, the conditional probability $\pr{X \given Y}$ has a closed-form expression, making it easy to evaluate, unlike $\pr{X}$.
    \item Given $f_1, \ldots, f_d$ which can be understood as treatments, the expected value of partial perfect information (EVPPI) \citep{giles2019decision}, is $\ex_X{\max_k\ex_{Y \mid X}{f_k(X,Y)}} - \max_k \ex{f_k(X,Y)}$, which captures the benefit of knowing $X$. Here, a nested expectation appears also in the first term.
    \item In financial engineering, financial derivatives are typically evaluated using expectations and, therefore, Monte Carlo methods are often the method of choice in practice. One of the most popular derivatives is the so-called call option, whose value (in simplified form) can be evaluated as
    \[\ex_{Y \mid X}{\max(Y - k, 0) \given X}.\]
    Here, $k$ is the strike price, $Y$ is the price of the underlying asset at a future date, and $X$ represents the available information on the underlying. For instance, the value of a Call on a Call (CoC) option (a call in which the underlying is also a call option with the same strike price) is
        \[\ex_X{\max(\ex_{Y \mid X}{\max(Y-k,0) \given X}-k,0)}.\]
    \item The objective function in conditional stochastic optimization (CSO) \citep{hu2020biased, he2024debiasing} is formulated as a nested expectation, i.e., 
        \[\min_x F(x) := \min_x \ex_{\xi}{f(\ex_{\eta\mid\xi}{g_{\eta}(x,\xi)})}.\]
\end{itemize}

Numerous methods are available for estimating nested expectations. The most natural way is by \textit{nesting Monte Carlo estimators}. This method works by first sampling i.i.d. $X_1, X_2, \ldots, X_m$. For each $X_i$, one further samples $Y_{i,1}, Y_{i,2},\ldots, Y_{i,n}$ i.i.d. from $\pr{Y \given X_i}$. Then $\gamma(X_i)$ can be estimated by $\ha{\gamma}(X_i) := \sum_{j=1}^n \phi(X_i, Y_{i,j})/n$, and the final estimator is of the form 
    \begin{equation}
        \label{eqn:nested-estimator}
        \frac{\sum_{i=1}^{m} g(X_i, \ha{\gamma}(X_i))}{m} = \frac{\sum_{i=1}^{m} g(X_i, \sum_{j=1}^n \phi(X_i, Y_{i,j})/n)}{m}.
    \end{equation}

Under different smoothness assumptions on $g$, this nested estimator costs $nm = \bo{1/\eps^3}$ to $\bo{1/\eps^4}$ samples to achieve a mean square error (MSE) of up to $\eps^2$ \citep{hong2009estimating, rainforth2018nesting}. The Multilevel Monte Carlo (MLMC) technique further improves efficiency to $\bo{1/\eps^2}$ or $\bo{(1/\eps^2)\log(1/\eps)^2}$, as outlined in Section 9.1 of \cite{giles2015multilevel} and \cite{blanchet2015unbiased}. 

If users have access to a quantum computer, \cite{an2021quantum} proposed a QA-MLMC algorithm that improves upon the classical MLMC algorithm. The improvement ranges from quadratic to sub-quadratic---or can even be nonexistent---depending on the parameters of the MLMC framework, detailed in \Cref{subsec: classical MLMC,subsec: quantum MLMC}.

Under the standard Lipschitz assumptions, directly applying the QA-MLMC of \cite{an2021quantum} to our non-linear nested expectation problem incurs a cost of $\wbo{1/\eps^{2}}$. This represents no improvement over the classical cost! A technical analysis explaining the loss of the quadratic speed-up is provided in \Cref{subsec: motivation}.

%#########################################################################

\subsection{Our contribution}
\label{subsec:contribution}

Our main algorithmic contribution of this paper is a \textit{quantum-inside-quantum MLMC algorithm} to estimate nested expectations. Under standard technical assumptions, the algorithm achieves a cost of~$\wbo{1/\eps}$ to produce an estimator with $\eps$-accuracy, which is proven to be optimal among all quantum algorithms up to logarithmic factors. As such, our algorithm provides a quadratic speedup compared to the classical MLMC algorithm, making it more efficient than the direct adaptation of the quantum algorithm proposed in \cite{an2021quantum}. The comparison between our algorithm and existing methods is summarized in \Cref{tab: cost comparison}.

\begin{wraptable}{r}{0.7\textwidth} 
    \vspace{-25pt}
    \caption{Cost of estimating \eqref{eqn:nested expectation} with $\eps$-accuracy using different methods.}
    \label{tab: cost comparison}
    \begin{center}
    \begin{tabular}{|c|c|}
    \hline
    Method                       & Cost                          \\ \hline 
    Nested estimator (Eq. \eqref{eqn:nested-estimator}) \cite{rainforth2018nesting} & $\bo{\eps^{-4}}$         \\
    Classical MLMC (Thm \ref{thm:MLMC})  \cite{giles2015multilevel}             & $\wbo{\eps^{-2}}$   \\
    QA-MLMC (Thm \ref{thm:qa-MLMC})  \cite{an2021quantum}                   & $\wbo{\eps^{-2}}$ \\
    {{\color{blue} \textsc{Q-NestExpect} (Ours)}}             & {\color{blue} $\wbo{\eps^{-1}}$}   \\ \hline
    \end{tabular}
    \end{center}
    \vspace{-10pt}
\end{wraptable}

We  provide a brief explanation of the basis of our improvement and the term ``quantum-inside-quantum'', with more comprehensive explanations to follow in later sections. For a given target quantity  and an error budget $\eps$, every classical MLMC algorithm consists of two  components: \textbf{(1)} \textit{decomposing} the estimation problem into the task of separately estimating the expectations of different parts; and \textbf{(2)} \textit{distributing} the total error budget $\eps$ among these parts to minimize the overall computational cost, followed by using (classical) Monte Carlo to estimate each expectation. The QA-MLMC algorithm \citep{an2021quantum} replaces step (2) with quantum-accelerated Monte Carlo methods. However, we observed that there are multiple ways to perform the decomposition in step (1). Notably, the most natural decomposition for classical MLMC is not optimal for quantum algorithms in our nested expectation problem. To address this, we developed a new decomposition strategy based on a different sequence of quantum-accelerated Monte Carlo subroutines, ultimately achieving the desired quadratic speedup. Thus, the ``outside quantum'' refers to the quantum-accelerated Monte Carlo algorithm in Step (2) , while the ``inside quantum'' denotes the  quantum subroutines used in Step (1).

Besides addressing the nested expectation problem, we show that quantum computing introduces new flexibility to MLMC. By redesigning the MLMC subroutines to leverage quantum algorithms, additional gains can be achieved. As MLMC is widely applied in computational finance, uncertainty quantification, and engineering simulations, we believe that our strategy can also improve algorithmic efficiency in many of these applications.

Finally, we stress that our quantum-accelerated Monte Carlo method uses quantum-computing algorithms to speed up the estimation of classical stochastic problems; it should not be conflated with Quantum Monte Carlo (QMC) methods, which are a separate class of classical algorithms developed for simulating quantum many-body systems (see, for example, \cite{foulkes2001quantum}).

The remainder of the paper is organized as follows. \Cref{sec:mlmc} reviews the fundamentals of MLMC and QA-MLMC. \Cref{sec:main theorem} outlines the limitations of existing methods, motivates our approach, and presents the new algorithm together with a proof sketch. \Cref{sec:application} shows how our method accelerates several practical problems that involve nested expectations. \Cref{sec:future} summarizes the work, notes its limitations, and suggests directions for future research. Detailed proofs are provided in the Appendix.

%#########################################################################
%#########################################################################

\section{Multilevel Monte Carlo}\label{sec:mlmc}

\subsection{Classical MLMC}\label{subsec: classical MLMC}

MLMC methods are designed to estimate specific statistics of a target distribution, such as the expectation of a function of a Stochastic Differential Equation (SDE) \citep{giles2008multilevel}, a function of an expectation \citep{blanchet2015unbiased}, or solutions to stochastic optimization problems, quantiles, and steady-state expectations \citep{blanchet2019unbiased}. Users typically have access to a sequence of approximations, with each successive level offering reduced variance but increased computational cost. The following theorem provides theoretical guarantees for classical MLMC.

\begin{theorem}[Theorem 1 in \cite{giles2015multilevel}]
    \label{thm:MLMC}
    Let $\mu$ denote a quantity that we want to estimate.
    Suppose for each integer $l\geq 0$, we have an algorithm $\calA_l$ that outputs a random variable $\Delta_l$ with variance~$V_l$ and computational cost $C_l$. Define $s_l := \sum_{k=0}^l \ex{\Delta_k}$ and assume for some $(\alpha, \beta, \gamma)$ with $\alpha \geq \frac{1}{2}\max\{\gamma, \beta\}$ that the following holds: (i) $\abs{s_l - \mu} \leq \bo{2^{-\alpha l}}$; (ii) $V_l \leq \bo{2^{-\beta l}}$; (iii) $C_l \leq \bo{2^{\gamma l}}$.
    Then for any fixed $\eps < 1/e$, one can choose positive integers $L, N_0, \ldots, N_L$ depending on $(\alpha,\beta, \gamma)$, and construct the estimator 
    $\ha{\mu} := \sum_{l=0}^L \frac{1}{N_l} \sum_{i=1}^{N_l}\Delta_l^{(i)}$
    satisfying $\ex{(\ha{\mu} - \mu)^2} < \eps^2$ with cost:
        \[\sum_{l=0}^L C_l \cdot N_l = \begin{cases}
            \bo{\eps^{-2}}, & \text{when}\ \beta > \gamma\\
            \bo{\eps^{-2}(\log\eps)^2}, & \text{when}\ \beta = \gamma\\
            \bo{\eps^{-2 - (\gamma - \beta)/\alpha}}, & \text{when}\ \beta < \gamma,
        \end{cases}\]
    here $\Delta_l^{(1)}, \Delta_l^{(2)}, \ldots$ are i.i.d. copies of $\Delta_l$ for each $l$.
\end{theorem}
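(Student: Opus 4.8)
The plan is the classical bias--variance route. First I would observe that, by independence of the $\Delta_l^{(i)}$ across both $l$ and $i$, the estimator is unbiased for $s_L$, i.e.\ $\ex{\ha\mu} = \sum_{l=0}^L \ex{\Delta_l} = s_L$, and $\var{\ha\mu} = \sum_{l=0}^L V_l/N_l$. Hence
\[
    \ex{(\ha\mu-\mu)^2} = (s_L - \mu)^2 + \sum_{l=0}^L \frac{V_l}{N_l},
\]
so it suffices to pick $L$ making the first term at most $\eps^2/2$, and then pick $N_0,\dots,N_L$ making the second term at most $\eps^2/2$, all while keeping $\sum_l C_l N_l$ small.

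Second, I would fix $L$. By hypothesis (i), $(s_L-\mu)^2 = \bo{2^{-2\alpha L}}$, so setting $L = \lceil \alpha^{-1}\log_2(1/\eps)\rceil$ plus an additive constant absorbing the hidden factor forces $(s_L-\mu)^2 \le \eps^2/2$; the assumption $\eps < 1/e$ guarantees $L$ is a well-defined positive integer, and this choice gives $2^L = \Theta(\eps^{-1/\alpha})$, the quantity that controls everything downstream.

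Third, I would fix the sample sizes. The constrained problem ``minimize $\sum_l C_l N_l$ subject to $\sum_l V_l/N_l \le \eps^2/2$'' has, by a Lagrange-multiplier computation (or by direct verification), the near-optimal solution $N_l \propto \sqrt{V_l/C_l}$; concretely I would take $N_l = \bigl\lceil 2\eps^{-2}\sqrt{V_l/C_l}\,\sum_{k=0}^L\sqrt{V_kC_k}\,\bigr\rceil$. Since rounding up only enlarges $N_l$, the variance bound $\sum_l V_l/N_l \le \eps^2/2$ is preserved, and peeling off the $+1$ from each ceiling yields
\[
    \sum_{l=0}^L C_l N_l \;\le\; \frac{2}{\eps^2}\Bigl(\sum_{l=0}^L\sqrt{V_lC_l}\,\Bigr)^{\!2} + \sum_{l=0}^L C_l .
\]

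Finally, I would estimate the two terms. Hypotheses (ii)--(iii) give $\sqrt{V_lC_l} = \bo{2^{(\gamma-\beta)l/2}}$, and summing this geometric-type series over $0\le l\le L$ yields $\sum_l\sqrt{V_lC_l} = \bo{1}$ when $\beta>\gamma$, $\bo{L} = \bo{\log(1/\eps)}$ when $\beta=\gamma$, and $\bo{2^{(\gamma-\beta)L/2}} = \bo{\eps^{-(\gamma-\beta)/(2\alpha)}}$ when $\beta<\gamma$; squaring and multiplying by $2\eps^{-2}$ reproduces exactly the three cases of the claimed cost. For the leftover overhead, $\sum_l C_l = \bo{2^{\gamma L}} = \bo{\eps^{-\gamma/\alpha}}$, and here is where the standing assumption $\alpha\ge\tfrac12\max\{\gamma,\beta\}$ is used: $\gamma/\alpha\le 2$ shows this term is $\bo{\eps^{-2}}$ (subsumed in the first two cases), while $\beta\le 2\alpha$ shows $\gamma/\alpha \le 2+(\gamma-\beta)/\alpha$ (subsumed in the third case). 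I expect the only non-mechanical step to be precisely this control of the rounding overhead $\sum_l C_l$: one must check both that rounding does not break the variance guarantee (it does not, by monotonicity) and that rounding many small $N_l$ up to $1$ at the deep levels does not dominate the cost, which is exactly what the hypothesis on $\alpha$ buys; everything else is the standard bias--variance split and geometric summation.
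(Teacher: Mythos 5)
Your proposal is correct and is essentially the canonical proof of this theorem: the paper itself does not prove \Cref{thm:MLMC} but imports it verbatim from \cite{giles2015multilevel}, whose argument is exactly your bias--variance split, the choice $L = \Theta(\alpha^{-1}\log_2(1/\eps))$, the Lagrange-optimal $N_l \propto \sqrt{V_l/C_l}$ with ceilings, and the three-way geometric summation of $\sqrt{V_lC_l}$. You also correctly identify the one delicate point --- that the rounding overhead $\sum_l C_l = \bo{\eps^{-\gamma/\alpha}}$ is absorbed precisely because $\alpha \geq \tfrac12\max\{\gamma,\beta\}$ --- so nothing is missing.
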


The  MLMC estimator described in \Cref{thm:MLMC} can be rewritten as $\ha{\mu} = \sum_{l=0}^L \ha{\delta}_l$, with each $\ha{\delta}_l :=\sum_{i=1}^{N_l}\Delta_l^{(i)}/N_l$ acting as a Monte Carlo estimator for $\ex{\Delta_l}$.

The design and analysis of the sequence $\calA_l$ are central to  apply \Cref{thm:MLMC} in every MLMC application. In the nested expectation problem, we follow Section 9.1 of \cite{giles2015multilevel} and provide a possible MLMC solution here. We define $\calA_0$ as: 1) simulate $X$; 2) simulate $Y$ given $X$; 3) output $g(X, \phi(X,Y))$. When $l \geq 1$, $\calA_l$ is defined in \Cref{alg:nested-classicalMLMC-Al}.

\begin{wrapfigure}{R}{0.68\textwidth}
    \vspace{-20pt}
    \begin{minipage}{0.68\textwidth} 
        \begin{algorithm}[H]
        \caption{Classical MLMC for nested expectation: $\calA_l$ ($l\geq 1)$}
        \label{alg:nested-classicalMLMC-Al}
        \begin{algorithmic}[1]
            \STATE Generate $X$.
            \STATE Generate $Y_1, \ldots, Y_{2^l}$ conditional on $X$.
            \STATE Set $S_{\text{o}} = \sum_{i \text{~odd}} \phi(X, Y_i)$, $S_{\text{e}} = \sum_{i \text{~even}} \phi(X, Y_i)$. 
            \STATE \textbf{Output:} $g\pt{X, 2^{-l}(S_\text{e} + S_{\text{o}})} - \frac{g\pt{X, 2^{-(l-1)}S_{\text{e}}} + g\pt{X, 2^{-(l-1)}S_{\text{o}}}}{2}$. 
        \end{algorithmic}
        \end{algorithm}
    \end{minipage}
\end{wrapfigure}

Under Assumptions 1--4 given in \Cref{sec:main theorem}, one can check that the sequence of algorithms~$\calA_l$ satisfies $\alpha = 0.5$, $\beta = \gamma = 1$ in \Cref{thm:MLMC}, with a proof given in \Cref{sec: verify mlmc}. Therefore \Cref{thm:MLMC} applies and one can estimate the nested expectation with cost $\bo{\eps^{-2}(\log\eps)^2}$.

%#########################################################################

\subsection{\texorpdfstring{Quantum-accelerated MLMC in \cite{an2021quantum}}{Quantum-accelerated MLMC}}
\label{subsec: quantum MLMC}

In \cite{an2021quantum}, the authors propose a quantum-accelerated MLMC (QA-MLMC) algorithm that improves upon the classical MLMC from \Cref{thm:MLMC} in certain parameter regimes. The key insight is as follows: recall that $\ha{\delta}_l := \sum_{i=1}^{N_l}\Delta_l^{(i)}/N_l$ is a classical Monte Carlo estimator for $\ex{\Delta_l}$, meaning it estimates $\ex{\Delta_l}$ through i.i.d. sampling and then takes the average. This process can be accelerated by applying quantum-accelerated Monte Carlo (QA-MC) techniques, such as those discussed in \cite{montanaro2015quantum}. By replacing the classical Monte Carlo estimator $\ha{\delta}_l$ with its quantum counterpart, the following result is shown: 

\begin{theorem}[Theorem 2 in \cite{an2021quantum}]\label{thm:qa-MLMC}
    With the same assumptions as in \Cref{thm:MLMC}, there is a quantum algorithm that estimates $\mu$ up to additive error $\eps$ with probability at least $1-\delta$, and with cost
    \[\begin{cases}
        \wbo{\eps^{-1}\log(\delta^{-1})}, & \text{when}\ \beta \geq 2\gamma \\
        \wbo{\eps^{-1 - (\gamma-0.5\beta)/\alpha}\log(\delta^{-1})}, & \text{when}\ \beta < 2\gamma.
    \end{cases}\]
\end{theorem}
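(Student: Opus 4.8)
The plan is to follow \cite{an2021quantum}: truncate the telescoping sum $\mu = \sum_{l \geq 0}\ex{\Delta_l}$ at a level $L$ and estimate each surviving term by quantum-accelerated mean estimation. First I would fix $L$: by assumption~(i), $\abs{s_L - \mu} \leq c\,2^{-\alpha l}$ for a constant $c$, so $L := \lceil \alpha^{-1}\log_2(2c/\eps)\rceil = \bo{\alpha^{-1}\log(1/\eps)}$ makes the truncation bias at most $\eps/2$, and in particular $2^L = \bo{\eps^{-1/\alpha}}$. It then suffices to estimate $s_L = \sum_{l=0}^L \ex{\Delta_l}$ within additive error $\eps/2$. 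I would allocate a per-level error budget $\eps_l > 0$ with $\sum_{l=0}^L \eps_l \leq \eps/2$ and a per-level failure probability $\delta_l$ with $\sum_{l=0}^L \delta_l \leq \delta$ (e.g.\ $\delta_l = \delta/(L+1)$), produce for each $l$ an estimate of $\ex{\Delta_l}$ accurate to $\eps_l$ with probability $\geq 1-\delta_l$, and output the sum; a union bound over failures together with the triangle inequality gives total error $\leq \eps$ with probability $\geq 1-\delta$.

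For the per-level step I would invoke quantum-accelerated mean estimation (Montanaro's algorithm, \cite{montanaro2015quantum}): given the quantum-accessible sampler $\calA_l$, viewed as a circuit of cost $C_l$ preparing a superposition over the outputs of $\Delta_l$, and the known variance bound $V_l$ from assumption~(ii), one obtains an estimate of $\ex{\Delta_l}$ within $\eps_l$ and failure probability $\delta_l$ using $\wbo{(\sqrt{V_l}/\eps_l)\log(1/\delta_l)}$ controlled applications of $\calA_l$ and $\calA_l^\dagger$, hence at cost $\wbo{C_l(1 + \sqrt{V_l}/\eps_l)\log(1/\delta_l)}$ (the $1$ accounting for the fact that at least one circuit application is always made). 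Summing over $l$ and absorbing $\log(1/\delta_l) = \bo{\log(L/\delta)}$ into $\wbo{\cdot}$, the total cost is $\wbo{\log(\delta^{-1})\big(\sum_{l=0}^L C_l + \sum_{l=0}^L \sqrt{V_l}\,C_l/\eps_l\big)}$.

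It remains to optimize the allocation: minimizing $\sum_l \sqrt{V_l}\,C_l/\eps_l$ over $\eps_l > 0$ with $\sum_l \eps_l \leq \eps/2$ is a one-line Lagrange-multiplier (equivalently Cauchy--Schwarz) exercise whose optimum is $\eps_l \propto (\sqrt{V_l}\,C_l)^{1/2}$, giving $\sum_l \sqrt{V_l}\,C_l/\eps_l = \bo{\eps^{-1}\big(\sum_{l=0}^L V_l^{1/4} C_l^{1/2}\big)^2}$. By (ii)--(iii), $V_l^{1/4} C_l^{1/2} = \bo{2^{(2\gamma-\beta)l/4}}$; when $\beta \geq 2\gamma$ this geometric sum is $\bo{L} = \wbo{1}$, so the second term is $\wbo{\eps^{-1}}$, and when $\beta < 2\gamma$ it is dominated by $l = L$, giving $\bo{2^{(2\gamma-\beta)L/4}} = \bo{\eps^{-(2\gamma-\beta)/(4\alpha)}}$ and hence a second term of $\bo{\eps^{-1-(\gamma-\beta/2)/\alpha}}$. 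For the additive term, $\sum_l C_l = \bo{2^{\gamma L}} = \bo{\eps^{-\gamma/\alpha}}$; here the hypothesis $\alpha \geq \tfrac12\max\{\gamma,\beta\}$ enters, since it forces $\gamma/\alpha \leq 1$ when $\beta \geq 2\gamma$ (because then $\alpha \geq \beta/2 \geq \gamma$) and $\gamma/\alpha \leq 1 + (\gamma-\beta/2)/\alpha$ always (because $\alpha \geq \beta/2$), so $\sum_l C_l$ is dominated by the second term in both regimes. Collecting the cases yields exactly the two stated bounds.

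The main obstacle is less a single deep step than careful bookkeeping: (a) stating precisely which quantum mean-estimation primitive is used and in what query/cost model the samplers $\calA_l$ are accessed, so that composing that primitive with the level-$l$ circuits is legitimate and the per-level cost $\wbo{C_l \sqrt{V_l}/\eps_l}$ is justified; (b) propagating constants through the choice of $L$ so that the exponent $-(2\gamma-\beta)/(4\alpha)$ (equivalently $-(\gamma-0.5\beta)/\alpha$ after squaring) comes out cleanly; and (c) checking that the hypothesis $\alpha \geq \tfrac12\max\{\gamma,\beta\}$ is exactly what makes the $\sum_l C_l$ term harmless, mirroring the role it plays (ensuring $N_l \geq 1$) in the classical \Cref{thm:MLMC}. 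A secondary point to handle with care is that Montanaro-type estimators need an \emph{a priori} variance bound, which is precisely what assumption~(ii) supplies.
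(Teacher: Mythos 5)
Your reconstruction is correct and follows essentially the same route as the cited result (the paper itself only imports Theorem 2 of \cite{an2021quantum} without reproducing its proof; the sketch in \Cref{subsec: quantum MLMC} — replace each per-level classical Monte Carlo estimator $\ha{\delta}_l$ by a quantum-accelerated one and re-optimize the error allocation across levels — is exactly what you do). The Cauchy--Schwarz allocation, the case split on the sign of $2\gamma-\beta$, and the use of $\alpha \geq \tfrac12\max\{\gamma,\beta\}$ to absorb the $\sum_l C_l$ term all check out (modulo the harmless typo $2^{-\alpha l}$ for $2^{-\alpha L}$ in the truncation step).
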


Comparing QA-MLMC (\Cref{thm:qa-MLMC}) with classical MLMC (\Cref{thm:MLMC}) shows only a quadratic speed-up---up to polylogarithmic factors---when $\beta \geq 2\gamma$. In a typical scenario where $\beta = \gamma = 1$ and $\alpha \geq 0.5$, classical MLMC costs  $\bo{\eps^{-2}}$, whereas QA-MLMC costs $\wbo{\eps^{-1 - 0.5/\alpha}}$. In particular, QA-MLMC has cost $\wbo{\eps^{-2}}$ for the nested expectation problem, using the sequence of algorithms~$\calA_l$ defined in \Cref{alg:nested-classicalMLMC-Al} (where $\alpha = 0.5$). The loss of any quantum speedup prompts consideration of whether the algorithm can be improved. In general, the answer is no, as we show in \Cref{sec:lowerbound} that the QA-MLMC algorithm of \cite{an2021quantum} obeys the following lower bound.

\begin{proposition}[Lower bound for the general QA-MLMC]
    \label{prop:lowerbound}
    For any $(\alpha, \beta, \gamma)$, there is no quantum algorithm that can solve the problem stated in \Cref{thm:MLMC} for all sequences of algorithms $\{\calA_l\}_{l \geq 0}$ using fewer operations than $\om{\eps^{-1}}$ (when $\beta \geq 2\gamma$) or $\om{\eps^{-1-(\gamma-0.5\beta)/\alpha}}$ (when $\beta \leq 2\gamma$).
\end{proposition}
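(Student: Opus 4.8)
The plan is to establish the two (overlapping) regimes by exhibiting, for every triple $(\alpha,\beta,\gamma)$ with $\gamma\ge0$ and every target accuracy $\eps$, an explicit family of level-sequences $\{\calA_l\}_{l\ge0}$ all satisfying hypotheses (i)--(iii) of \Cref{thm:MLMC} with \emph{these} parameters, and on which recovering $\mu$ to additive error $\eps$ is equivalent to a quantum search / mean-estimation task of exactly the claimed complexity. The single external ingredient is the optimality of quantum mean estimation: distinguishing a Bernoulli variable of bias $\tfrac12-x$ from one of bias $\tfrac12+x$ requires $\om{1/x}$ quantum queries (equivalently, estimating the mean of a random variable with variance at most $\sigma^2$ to additive error $\eta$ requires $\om{\sigma/\eta}$ quantum samples); this is the standard lower bound matching the amplitude-estimation subroutine behind \Cref{thm:qa-MLMC}, which we invoke as a black box. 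The failure probability is fixed to a constant throughout, which only weakens the bound.

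For the bound $\om{\eps^{-1}}$ --- valid for all admissible $(\alpha,\beta,\gamma)$, in particular when $\beta\ge2\gamma$ --- take $\calA_0$ to emit one $\{0,1\}$ sample of unknown bias $p$ and let $\calA_l$ emit the constant $0$ for every $l\ge1$. Then $V_0=\Theta(1)$, $V_l=0$ and $C_l=O(1)$ for $l\ge1$, and $s_l=p=\mu$ for all $l$, so (i)--(iii) hold for any such $(\alpha,\beta,\gamma)$. Any algorithm meeting the specification of \Cref{thm:MLMC} must output $p$ within $\eps$, forcing $\om{1/\eps}$ queries to $\calA_0$; since the higher levels do not depend on $p$, a hybrid argument rules out using them to help.

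For the refined bound when $\beta<2\gamma$, we hide the answer at a fine level. Choose $L$ with $2^{-\alpha L}=\Theta(\eps)$, the hidden constant taken large enough for the construction below; set $\sigma:=2^{-\beta L/2}=\Theta(\eps^{\beta/(2\alpha)})$ and $C:=2^{\gamma L}=\Theta(\eps^{-\gamma/\alpha})$, and note $\sigma\ge2\eps$ since $\beta\le2\alpha$ (from $\alpha\ge\tfrac12\max\{\gamma,\beta\}$). Let $\calA_l$ emit the constant $0$ for every $l\ne L$, and let $\calA_L$ be a black box of cost $C$ that emits $+\sigma$ with probability $\tfrac12+b\eps/\sigma$ and $-\sigma$ otherwise, where $b\in\{+1,-1\}$ is hidden; then $\ex{\Delta_L}=2b\eps$ and $\var{\Delta_L}=\sigma^2-4\eps^2\le\sigma^2$. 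Hypotheses (i)--(iii) are immediate: only $V_L\le\sigma^2=2^{-\beta L}$ is nonzero; $C_l=O(1)\le2^{\gamma l}$ for $l\ne L$ and $C_L=\Theta(2^{\gamma L})$; and $\abs{s_l-\mu}$ is $0$ for $l\ge L$ and at most $2\eps=O(2^{-\alpha(L-1)})=O(2^{-\alpha l})$ for $l<L$. An $\eps$-accurate estimate of $\mu$ determines $b$, i.e.\ distinguishes the biases $\tfrac12\pm\eps/\sigma$ of $\calA_L$, which needs $\om{\sigma/\eps}$ quantum queries to $\calA_L$; the remaining levels carry no information about $b$, so the total cost is $\om{(\sigma/\eps)\cdot C}=\om{\eps^{\beta/(2\alpha)-1-\gamma/\alpha}}=\om{\eps^{-1-(\gamma-\beta/2)/\alpha}}$.

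The step requiring genuine care --- and the main obstacle --- is the reduction from an adaptive quantum algorithm with oracle access to \emph{all} of the $\calA_l$ down to a query lower bound against the single oracle $\calA_L$: one must show, via a hybrid/indistinguishability argument, that calls to the information-free levels $\calA_l$ ($l\ne L$), together with the algorithm's knowledge of $(\alpha,\beta,\gamma)$ and $\eps$, cannot reduce the query count at level $L$, and one must pin down the accounting that one execution of $\calA_l$ costs $C_l$ operations (as in \Cref{thm:qa-MLMC}) so that a query lower bound at level $L$ upgrades to an operation-count lower bound. A minor loose end is the boundary $\beta=2\gamma$, where $\sigma=\Theta(\eps)$ makes the bias parameter $\eps/\sigma$ of constant order, so the query lower bound degrades to $\om{1}$ but the per-query cost $C=\Theta(\eps^{-\gamma/\alpha})$ still matches $\eps^{-1-(\gamma-\beta/2)/\alpha}$ there; combined with the first branch this yields $\om{\eps^{-1}}$ when $\beta\ge2\gamma$ and $\om{\eps^{-1-(\gamma-\beta/2)/\alpha}}$ when $\beta\le2\gamma$, as claimed.
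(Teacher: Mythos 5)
Your first branch (a single Bernoulli sample at level $0$, all higher levels constant, invoke the $\om{1/\eps}$ lower bound for quantum mean estimation) is essentially identical to the paper's argument for $\beta \geq 2\gamma$, and your parameter choices for the second branch ($L$ with $2^{-\alpha L}=\Theta(\eps)$, variance $2^{-\beta L}$, cost $2^{\gamma L}$, bias $\Theta(\eps)$ concentrated at level $L$) also match the paper's hard instance and correctly verify hypotheses (i)--(iii) of \Cref{thm:MLMC}.

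However, the second branch has a genuine gap, and it is exactly the step you dismiss as ``pinning down the accounting.'' Your $\calA_L$ is ``a black box of cost $C$'' emitting a $\pm\sigma$ coin with bias $\eps/\sigma$ --- but the cost $C$ is only a label you attach to the box, not a property of any underlying computation. The proposition is a statement about the number of \emph{operations} (elementary queries to the input), matching the model in which the upper bound of \Cref{thm:qa-MLMC} is proved. If one instantiates your $\calA_L$ as a single query to a biased-coin oracle padded with $C-1$ idle steps, the honest query lower bound is only $\om{\sigma/\eps}$; multiplying by $C$ is an accounting convention, not a theorem, and an algorithm that accesses the input directly rather than through the declared subroutine would beat your claimed bound. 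This is the crux of the proof, not a loose end, and your construction as stated cannot deliver it. The paper's resolution is to make the per-sample cost \emph{intrinsically} hard: level $L$ outputs the parity of a uniformly random row of an $n\times m$ Boolean matrix, so that producing one sample genuinely requires $\om{m}$ queries, and the counting and parity hardnesses are multiplied via the composition property of the quantum adversary method, yielding $\om{m\sqrt{n/t}}$ for $\textsc{Count}^{t,n}\circ\textsc{Parity}^m$ and hence $\om{\eps^{-1-(\gamma-0.5\beta)/\alpha}}$ after the substitution $m=\eps^{-\gamma/\alpha}$, $t/n=\eps^{2-\beta/\alpha}$. Incidentally, the step you flag as the ``main obstacle'' --- ruling out help from the information-free levels $l\neq L$ --- is the easy part, since those levels are input-independent constants and can be simulated for free; the real work is the composition argument you are missing.
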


The rest of the paper is dedicated to showing that the nested expectation problem does not suffer from this lower bound, by presenting a faster quantum algorithm with cost $\wbo{\eps^{-1}}$, but tailored to this problem.

%#########################################################################
%#########################################################################

\section{Main Algorithm}
\label{sec:main theorem}

The nested expectation $\ex_X{g(X,\ex_{Y \mid X}{\phi(X,Y)})}$ depends on $g$ and $\phi$. We define:

\begin{definition}
    A function $g: \calX \times \mathbb R \rightarrow \mathbb{R}$ is 
    uniformly $K$-Lipschitz in the second component
    if there exists a positive real number $K$ such that
    for all $x \in \calX$ and $y_1, y_2 \in \mathbb{R}$,
        \[\abs{g(x,y_{1}) - g(x,y_{2})} \leq K \abs{y_{1} - y_{2}}.\]
\end{definition}

 We pose five assumptions:

\begin{enumerate}
    \item The function $g$ is uniformly $K$-Lipschitz in its second component.
    \item There is a number $V$ known to the users satisfying $\ex_{Y \mid x}{\phi(x,Y)^2} \leq V$ 
    for every $x\in \calX$.
    \item There is a number $S$ known to the users satisfying $\var_X{g(X,\ex_{Y \mid X}{\phi(X,Y)})} \leq S$.
    \item We assume that users can query both $\phi$ and $g$, and that each query incurs a unit cost.
    \item We have access to the following two randomized classical algorithms $\gen_X$ and $\gen_Y$, and every call of either one incurs a unit cost:
        \begin{enumerate}
            \item $\gen_X$ outputs samples from the distribution of $X$ without requiring any input.
            \item $\gen_Y(x)$ accepts an input $x$ from $\calX$ and generates a sample of $Y$ based on the conditional distribution $\pr{Y \given X = x}$.
        \end{enumerate}
\end{enumerate}

We briefly review the five assumptions. The first assumption is critical for achieving the $\wbo{1/\eps}$ upper bound in \Cref{thm:our}. This assumption holds in many practical problems of interest, particularly for functions like $g(x, y) := \max\{x, y\}$ or $x+y$. The second and third assumptions are technical conditions to ensure the variance and conditional second moment are well-behaved. For example, the second assumption is satisfied when $(X,Y)$ follows a regression model $Y = f(X) + \text{Noise}$ where noise has zero mean and variance no more than $V$, and $\phi(x,y)$ is taken as $y - f(x)$. The fourth assumption is commonplace and is consistently invoked---either explicitly or implicitly---in related works \citep{giles2019decision, rainforth2018nesting, syed2023optimal}.
 
The final assumption requires that users have complete access to the sampling procedure itself---for instance, a Python script or pseudocode that produces the distribution---rather than relying on a black-box or API that only delivers samples. This assumption is common in the simulation literature, and holds in all applications we are aware of. Under Assumption 5, if we are given a randomized algorithm that generates a random variable $X$, we can convert it into a reversible algorithm (with constant factor overhead for the time complexity) and then implement it as a quantum circuit $U:\ket{0} \mapsto \sum_x \sqrt{\pr{X = x}}\ket{x}\ket{\text{garbage}_x}$ using classical Toffoli and NOT gates \citep{bennett1973logical}. Here, \(\ket{\text{garbage}_x}\) denotes a work tape that stores the intermediate states of the sampling procedure, ensuring that the process remains reversible.

We show:

\begin{theorem}\label{thm:our}
    With assumptions 1--5 stated above, for any $\delta \in (0, 0.5)$, we can design an algorithm \textsc{Q-NestExpect} with cost $\wbo{K\sqrt{SV}\log(1/\delta)/\eps}$ that estimates $\ex{g(X,\ex{\phi(X,Y)})}$ with an absolute error of no more than $\eps$ and a success probability of $1-\delta$.
\end{theorem}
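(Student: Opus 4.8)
\emph{Overall strategy.} The plan is to keep the two-stage MLMC template of \Cref{thm:MLMC}, but to discard the sequence $\{\calA_l\}$ of \Cref{alg:nested-classicalMLMC-Al} --- whose levels sharpen the inner conditional expectation by drawing ever more \emph{classical} inner samples of $Y$ --- and replace it by a new sequence whose levels sharpen $\gamma(x) := \ex_{Y\mid X=x}{\phi(x,Y)}$ using a \emph{quantum} mean-estimation subroutine; the quantum-accelerated MLMC of \Cref{thm:qa-MLMC} is then run on top. This is the ``quantum-inside-quantum'' architecture: the inner routine is quantum-accelerated mean estimation (e.g.\ a quantum sub-Gaussian estimator, or median-boosted amplitude estimation) applied to $\gen_Y(x)$ and $\phi(x,\cdot)$, while the outer routine is the QA-MLMC of \cite{an2021quantum}. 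The point is a mismatch of exponents: a classical estimate of $\gamma(x)$ to additive precision $\eta$ costs $\bo{V\eta^{-2}}$, which pins $\alpha=\tfrac12$, $\beta=\gamma=1$ --- exactly the regime in which \Cref{thm:qa-MLMC} gives no speedup --- whereas a quantum estimate of the same precision costs only $\wbo{\sqrt V\,\eta^{-1}}$, and we will exploit this to reach $\alpha=1$, $\beta=2$, $\gamma=1$, the favorable regime $\beta\ge 2\gamma$ of \Cref{thm:qa-MLMC}.

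\emph{Construction and verification of the parameters.} Write $\mu = \ex_X*{g(X,\gamma(X))}$ for the target. For each level $l\ge 0$ I would let $\td\gamma_l(x)$ be a quantum-accelerated estimate of $\gamma(x)$ to additive precision $\eta_l \asymp \sqrt V\,2^{-l}$, clipped to $[-\sqrt V,\sqrt V]$ (legitimate since Assumption 2 forces $\abs{\gamma(x)}\le\sqrt V$), with exponentially decaying error tails (a sub-Gaussian estimator) or, failing that, failure probability $\bo{2^{-2l}}$ absorbed by the clipping. By the quantum mean-estimation guarantee, which uses the variance bound $V$, this costs $C_l = \wbo{\sqrt V/\eta_l} = \wbo{2^l}$ queries to $\gen_Y$ and $\phi$, so the cost exponent is $\gamma=1$. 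I would then set $\calA_0$ to output $\td\Delta_0 := g(X,\td\gamma_0(X))$ and, for $l\ge1$, $\calA_l$ to output $\td\Delta_l := g(X,\td\gamma_l(X)) - g(X,\td\gamma_{l-1}(X))$, with $X$ drawn once via $\gen_X$ and the two inner estimates run on fresh randomness. Using Assumption 1 ($g$ is $K$-Lipschitz in its second argument), the partial sums telescope and the bias is $\abs{\sum_{k=0}^{l}\ex{\td\Delta_k}-\mu} = \abs{\ex{g(X,\td\gamma_l(X))} - \ex{g(X,\gamma(X))}} \le K\,\ex{\abs{\td\gamma_l(X)-\gamma(X)}} = \bo{K\sqrt V\,2^{-l}}$, i.e.\ $\alpha=1$; the level variance obeys $V_l = \var{\td\Delta_l} \le K^2\,\ex{\abs{\td\gamma_l(X)-\td\gamma_{l-1}(X)}^2}$, and the triangle inequality $\abs{\td\gamma_l-\td\gamma_{l-1}} \le \abs{\td\gamma_l-\gamma}+\abs{\gamma-\td\gamma_{l-1}}$ together with the sub-Gaussian tail (or clipping, to kill the rare inner-failure event) gives $V_l = \bo{K^2 V\,2^{-2l}}$, i.e.\ $\beta=2$; finally Assumption 3 with Lipschitzness yields $V_0 = \bo{S+K^2 V}$. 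Since $\alpha=1=\tfrac12\max\{\beta,\gamma\}$ and $\beta=2\gamma$, the hypotheses of \Cref{thm:MLMC} and the ``$\beta\ge 2\gamma$'' branch of \Cref{thm:qa-MLMC} hold.

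\emph{Assembling the cost.} Feeding $\{\calA_l\}$ into \Cref{thm:qa-MLMC} --- or, to keep the constants $K,S,V$ explicit, re-running its proof with $L = \bo{\log(K\sqrt V/\eps)}$ levels and quantum-estimating each $\ex{\td\Delta_l}$ to a suitably apportioned slice of the error budget (the dominant slice being the coarse level carrying the variance $V_0$) --- produces the $\eps^{-1}$ scaling; propagating the problem constants ($K\sqrt V$ from the level biases, $S$ and $K^2V$ from the level variances, $\sqrt V$ from the per-query cost) collapses the level sum to $\wbo{K\sqrt{SV}/\eps}$. Finally, the quantum mean estimators succeed only with constant probability, so I would run the procedure $\bo{\log(1/\delta)}$ times and report the median, contributing the $\log(1/\delta)$ factor; this yields \textsc{Q-NestExpect} with cost $\wbo{K\sqrt{SV}\log(1/\delta)/\eps}$ and error at most $\eps$ with probability $1-\delta$.

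\emph{Main obstacle.} The delicate step is the coherent composition of the two quantum layers. The outer mean estimator inside \Cref{thm:qa-MLMC} needs each $\calA_l$ supplied as a reversible (hence unitary) state-preparation routine for $\ket{\td\Delta_l}$, but $\calA_l$ itself invokes a quantum subroutine (amplitude/mean estimation on $\gen_Y$ and $\phi$), which must therefore be run \emph{coherently}: all measurements deferred, any median/quantile post-processing realized as a reversible classical circuit, and the intermediate ``garbage'' registers carried along so as not to spoil the outer interference. One must then bound the perturbation that the inner estimator's residual failure probability, its amplitude-estimation discretization, and the clipping inflict on $\ex{\td\Delta_l}$ and on $\ex{\td\Delta_l^2}$, and argue that the several bias sources (MLMC truncation, inner estimation error, clipping, discretization) compound only additively and remain $\bo{\eps}$ --- which is why the inner accuracy parameters are taken polynomially small at only polylogarithmic overhead. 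Equivalently, one must verify that the analysis of \cite{an2021quantum} survives when the level routines $\calA_l$ are themselves quantum rather than classical. The rest --- checking $(\alpha,\beta,\gamma)=(1,2,1)$ under Assumptions 1--3, and fixing $L$ and the per-level budgets using Assumptions 4--5 --- is comparatively routine.
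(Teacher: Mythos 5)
Your proposal is correct and follows essentially the same route as the paper: a new level sequence in which each level's inner conditional expectation is computed by quantum mean estimation to precision $\asymp 2^{-l}$ at cost $\wbo{2^{l}}$ (with clipping to $[-\sqrt V,\sqrt V]$ absorbing the inner failure events), yielding $(\alpha,\beta,\gamma)=(1,2,1)$ so that the outer quantum mean estimator gives $\wbo{K\sqrt{SV}/\eps}$, followed by the median trick for the $\log(1/\delta)$ factor. The only differences are presentational (the paper re-derives the outer loop with explicit per-level budgets $\eps/(2L+2)$ and success probabilities $1-0.1^{l+1}$ rather than invoking \Cref{thm:qa-MLMC} as a black box, and normalizes the inner accuracy as $2^{-(l+1)}/K$ rather than $\sqrt V\,2^{-l}$), and the coherent-composition subtlety you flag is handled in the paper implicitly via Assumption 5 and the reversible-circuit conversion.
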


We highlight that the standard error metrics differ slightly between classical and quantum algorithms. Classical algorithms are often evaluated based on their expected error or mean-squared error, whereas quantum algorithms typically ensure that the error is small with high probability. The former metric is slightly stronger than the latter; however, under mild additional conditions, the two error measures become equivalent. Appendix A of \cite{an2021quantum} has a detailed comparison between the two types of errors. Our error metric aligns with nearly all quantum algorithms, such as \cite{brassard2002quantum, montanaro2015quantum, kothari2023mean, sidford2024quantum}.

%#########################################################################

\subsection{Roadmap}
\label{subsec: motivation}

\textbf{Why the quadratic speedup is lost?}

A simple yet important observation is that quantum-accelerated Monte Carlo methods typically offer a \textit{quadratic improvement in sample complexity} compared to classical Monte Carlo methods. However, this improvement does not automatically translate to a quadratic reduction in \textit{computational cost}. This distinction arises because the computational cost depends not only on the sample complexity but also on the cost of executing the underlying algorithm.

Suppose the goal is to estimate the expectation of a randomized algorithm $\calA$, and the cost of a single execution of $\calA$ is $C(\calA)$. The computational cost of classical Monte Carlo requires $\bo{\sigma^2 / \eps^2}$ samples, where $\sigma^2$ represents the variance of $\calA$, and $\eps$ denotes the desired error tolerance. Consequently, the total computational cost becomes $\bo{C(\calA) \sigma^2 / \eps^2}$. In contrast, the QA-MC algorithm described in \cite{kothari2023mean} (\Cref{thm:quantum-mean}) reduces the sample requirement to $\bo{\sigma / \eps}$. This results in a computational cost of $\wbo{C(\calA) \sigma / \eps}$. This implies that QA-MC achieves a nearly quadratic speedup in computational cost  \textit{only if the cost of running the algorithm $\calA$ is nearly constant}, i.e.,~$C(\calA) = \wbo{1}$, or can itself be reduced quadratically by other methods.

The above quadratic reduction is possible when considering the standard nested Monte Carlo estimator for \Cref{eqn:nested expectation}, however this estimator is not competitive with the MLMC framework. It operates by estimating the inner expectation $\ex_{Y \mid X = x}{\phi(X,Y)}$ with error $\bo{\eps/K}$ (which is sufficient by the Lipschitz property), nested withing an estimator of the outer expectation with error $\bo{\eps}$. The product of the sample complexity of the outer estimator $\calA$ and the cost $C(\calA)$ of the inner estimator results in an overall complexity of $\wbo{K^2SV/\eps^4}$ classically, or $\wbo{K\sqrt{SV}/\eps^2}$ quantumly. This is no better than the classical MLMC estimator with respect to $\eps$. 

Now, consider the MLMC framework. Instead of using a single randomized algorithm $\calA$, it uses a sequence of algorithms $\calA_l$ corresponding to levels $l = 0, 1, \dots, L$. The computational cost $C(\calA_l)$ generally grows exponentially with $l$. For example, $C(\calA_l) \sim 2^l$ in the MLMC solution for nested expectation problems (\Cref{alg:nested-classicalMLMC-Al}). The highest level $L$ is chosen as $\Theta(\log(1/\eps))$, therefore the cost at the highest level is $C(\calA_L) = \bo{\mathrm{poly}(1/\eps)}$.

The main idea from the QA-MLMC \citep{an2021quantum} is to replace each classical Monte Carlo estimator for $\ex{\Delta_l}$ with  QA-MC. While this substitution works in principle, there is a critical issue as the level~$l$ approaches the maximum level $L$: the computational cost $C_l$ can grow significantly, sometimes as high as $\bo{\mathrm{poly}(1/\eps)}$. This growth undermines the quadratic speedup achieved by  QA-MLMC because the total cost is no longer dominated by $\bo{1}$-cost subroutines. Instead, the increasing $C_l$ makes the computational cost scale poorly with the desired error tolerance $\eps$, leading to a loss of efficiency. Although the algorithm in \cite{an2021quantum} leverages quantum-accelerated Monte Carlo to estimate $\ex{\Delta_l}$, the achieved speedup is less than quadratic because $C(\calA_l)$ scales as $\bo{\text{poly}(1/\eps)}$ instead of $\bo{1}$ at the highest levels. This growth in cost undermines the expected quadratic speedup.

\textbf{How to recover the quadratic speedup?} 

We can break down \Cref{thm:MLMC} to capture its core insights, then leverage them to refine our quantum algorithm. \Cref{thm:MLMC} provides two main ideas: 
\begin{enumerate}
    \item The target quantity is written as a limit of quantities with vanishing bias, and then re-expressed as a telescoping sum, $\mu = \lim_{l\rightarrow \infty} s_l = \sum_{l=0}^\infty (s_{l+1} - s_l) + s_0$. The MLMC method estimates each level individually, optimizing resource allocation.
    \item Estimator Variance Control: At each level $l$, an estimator $\Delta_l$ is designed with expectation $\ex{\Delta_l} = s_{l+1} - s_l$, and bias and  variance diminishing with $l$, while cost increases with $l$. 
\end{enumerate}

In the classical solution (\Cref{alg:nested-classicalMLMC-Al}), the sequence $s_l$ is
    \[s_{l,\text{classical}} := \ex[\bigg]{g\pt[\Big]{X,2^{-l}\sum_{i=1}^{2^l}\phi(X,Y_i)}}.\]
By the law of large numbers, it is clear that $s_{l,\text{classical}}$ converges to the target. \Cref{alg:nested-classicalMLMC-Al} employs standard Monte Carlo to estimate $s_{l,\text{classical}} - s_{l-1,\text{classical}}$.

A key observation is that we can design a new sequence of quantum subroutines $\calA_{l,\text{quantum}}$. They have a computational cost similar to that of the classical $\calA_l$ (in the sense of the $\gamma$ parameter from \Cref{thm:MLMC}), while achieving improved statistical properties (parameters $\alpha$ and $\beta$ in \Cref{thm:MLMC}). To achieve this goal, we first define a new sequence $s_{l,\text{quantum}}$ that also converges to  the target value. Its definition is based on using quantum-accelerated Monte Carlo to estimate the \textit{conditional expectation}, with progressively higher precision as the sequence advances. Based on this new sequence, our $\calA_{l,\text{quantum}}$ can be naturally defined to estimate $s_{l,\text{quantum}} - s_{l-1,\text{quantum}}$. Finally, we can use existing quantum-accelerated Monte Carlo algorithms again to estimate $\ex{\Delta_{l,\text{quantum}}}$ at each level $l$. Together, these components achieve a quadratic speedup over MLMC, making this approach provably more efficient than a direct application of QA-MLMC.

A further distinction from \cite{an2021quantum} is the choice of the quantum subroutine: whereas \cite{an2021quantum} builds on the QA-MC algorithm in \cite{montanaro2015quantum}, we instead employ the newer QA-MC algorithm of \cite{kothari2023mean}. The latter demands less a priori information from the user (compared to \cite{montanaro2015quantum}) and removes the extra polylogarithmic factors in the error bound (compared to \cite{hamoudi2021gaussian}).

\textbf{Proof agenda, and the median trick} 

The majority of our work is to show the following: 

\begin{proposition}
    \label{prop:qnest}
    Under the same assumption as \Cref{thm:MLMC}, we can design an algorithm \textsc{Q-NestExpect-0.8} (\Cref{alg:quantum_nested_expectation}) with cost $\wbo{K\sqrt{SV}/\eps}$ that estimates $\ex{g(X,\ex{\phi(X,Y)})}$ with an absolute error of no more than $\eps$ and a success probability of $0.8$.
\end{proposition}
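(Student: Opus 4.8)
The plan is to follow the roadmap sketched in \Cref{subsec: motivation} and instantiate the MLMC framework of \Cref{thm:MLMC} with a carefully designed \emph{quantum} sequence of subroutines. The first step is to define the new target sequence $s_{l,\text{quantum}}$. Rather than approximating the inner conditional expectation $\gamma(x) = \ex_{Y\mid X=x}{\phi(x,Y)}$ by a plain empirical average of $2^l$ samples (as in \Cref{alg:nested-classicalMLMC-Al}), I would let $\ha\gamma_l(x)$ be the output of the QA-MC algorithm of \cite{kothari2023mean} (\Cref{thm:quantum-mean}) run on the random variable $\phi(x,Y)$, $Y\sim\pr{Y\mid X=x}$, configured to produce an estimate with error roughly $2^{-l}\sqrt V$ (up to constants, since $\ex{\phi(x,Y)^2}\le V$ by Assumption~2) with constant — say $0.9$ — success probability, boosted by a median-of-$\bo{l}$ trick to success probability $1-2^{-\Theta(l)}$ at cost $\wbo{2^l}$ queries. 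Then set $s_{l,\text{quantum}} := \ex_X{g(X,\ha\gamma_l(X))}$ (the expectation now also averaging over the internal randomness of the QA-MC subroutine). Because $g$ is $K$-Lipschitz in its second argument (Assumption~1), $|s_{l,\text{quantum}} - \ex_X{g(X,\gamma(X))}| \le K\cdot\ex_X{|\ha\gamma_l(X)-\gamma(X)|}$, and splitting on the median-trick failure event gives $|s_{l,\text{quantum}}-\mu| = \bo{K\sqrt V\,2^{-l}} + (\text{tiny failure contribution})$, so $\alpha = 1$. This is the crucial gain over the classical construction, whose Monte Carlo inner estimator only gives $\alpha = 1/2$.

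The second step is to define $\calA_{l,\text{quantum}}$ for $l\ge 1$ as the natural coupled difference estimator: generate $X\sim\gen_X$, then run the level-$l$ and level-$(l-1)$ inner QA-MC estimators on the same $X$ (sharing randomness/queries so that the two estimates are close), and output $\Delta_l := g(X,\ha\gamma_l(X)) - g(X,\ha\gamma_{l-1}(X))$, with $\calA_0$ outputting $g(X,\ha\gamma_0(X))$. By construction $\ex{\Delta_l} = s_{l,\text{quantum}} - s_{l-1,\text{quantum}}$ and $\sum_{k=0}^l\ex{\Delta_k} = s_{l,\text{quantum}}$. The variance bound requires $V_l = \var{\Delta_l} = \bo{2^{-\beta l}}$ with $\beta = 1$: since $|\Delta_l|\le K|\ha\gamma_l(X)-\ha\gamma_{l-1}(X)|$, and both inner estimates are within $\bo{2^{-l}\sqrt V}$ of $\gamma(X)$ on the good event, the difference is $\bo{2^{-l}\sqrt V}$, so $\var{\Delta_l} = \bo{K^2 V\, 4^{-l}}$ — actually giving $\beta = 2$ on the good event; the rare failure event of the median trick must be shown to contribute only $\bo{2^{-l}}$ to the variance, which is where the median boosting to $1-2^{-\Theta(l)}$ (at logarithmic overhead) is needed. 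The cost of one call to $\calA_{l,\text{quantum}}$ is dominated by the two inner QA-MC runs plus median boosting, i.e.\ $C_l = \wbo{2^l}$, so $\gamma = 1$.

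The third step is to feed $(\alpha,\beta,\gamma)$ into the \emph{quantum-outside} layer: since $\beta = 2 \ge 2\gamma = 2$, \Cref{thm:qa-MLMC} (equivalently, applying QA-MC of \cite{kothari2023mean} at each level to estimate $\ex{\Delta_l}$ and choosing $L = \Theta(\log(1/\eps))$, $N_l$ optimally) yields an estimator of $\mu$ with additive error $\eps$ at total cost $\wbo{1/\eps}$. Tracking the constants $K$, $S$, $V$ through the level-$0$ variance (bounded by $\var_X{g(X,\gamma(X))} + \bo{K^2V} = \bo{S + K^2V}$ using Assumption~3) and through the QA-MC sample complexities gives the stated $\wbo{K\sqrt{SV}/\eps}$ bound; the success probability $0.8$ comes from the constant-probability guarantee of the outer QA-MC layer (with the inner failures already absorbed into the error budget).

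I expect the main obstacle to be the \textbf{variance control at each level in the presence of subroutine failures} — i.e.\ rigorously showing $V_l = \bo{4^{-l}}$ (or at least $\bo{2^{-l}}$, which already suffices for $\beta\ge\gamma$). On the "good event" where every invoked QA-MC estimator meets its accuracy guarantee, the Lipschitz telescoping argument is routine; the delicate part is that QA-MC estimators can, with small probability, return wildly wrong values (their guarantee is only "within error with probability $\ge$ const"), so $\Delta_l$ is not bounded almost surely. Handling this requires either a median-of-$\Theta(l)$ amplification so the failure probability decays like $2^{-\Theta(l)}$ faster than the variance scale (contributing only $\wbo{l}$ to $C_l$, harmless for the $\gamma$ parameter), or a truncation argument using Assumption~2 to cap $|\ha\gamma_l|$ and bound the tail contribution — and one must check these interact correctly with the coupling between levels $l$ and $l-1$ so that the \emph{difference} (not each term) has the small variance. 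A secondary subtlety is verifying that the QA-MC algorithm of \cite{kothari2023mean} can be used as a coherent subroutine inside another quantum algorithm (the "quantum-inside-quantum" composition), which relies on Assumption~5 to realize $\gen_X,\gen_Y$ as reversible quantum circuits and on the ability to run the inner routine on a superposition of inputs $x$.
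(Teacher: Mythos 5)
Your proposal follows essentially the same route as the paper: define inner quantum subroutines $\calB_l$ via the QA-MC of \cite{kothari2023mean} targeting the conditional expectation with accuracy $\Theta(2^{-l})$, couple levels $l$ and $l-1$ on a shared $X$ to get $\alpha=1,\beta=2,\gamma=1$, and apply an outer QA-MC per level with $L=\Theta(\log(1/\eps))$. The one imprecision is that you present median-amplification of the inner success probability and Assumption-2-based truncation as \emph{alternative} fixes for the failure-event variance, whereas they must be used \emph{together} (the paper boosts the inner success probability to $1-2^{-(2l+1)}(4K^2V)^{-1}$ \emph{and} clips the estimate to $[-\sqrt V,\sqrt V]$, since amplification alone leaves the error on failure unbounded and clipping alone leaves a non-decaying $\bo{V}$ failure contribution); you clearly have both ingredients in hand, so this is a detail rather than a gap.
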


\Cref{prop:qnest} specifies a fixed success probability of 0.8, whereas \Cref{thm:our} allows an arbitrary value $1-\delta$. Nevertheless, once we apply the ``median trick'' detailed below, we can tolerate an arbitrarily small failure probability. The proof of \Cref{lem:median} is in \Cref{subsec:proof-median}.

\begin{lemma}[Median trick]
    \label{lem:median}
    Given a set of random independent  random variables $X_1, \ldots, X_n$ and an unknown deterministic quantity $z$, suppose that each $X_i$ satisfies $\pr{X_i \in (z-\delta,z+\delta)} \geq 0.8$, then $\pr{\median(X_1,\dots,X_n) \in (z-\delta,z+\delta)} \geq 1 - \exp(-0.18n)$.
\end{lemma}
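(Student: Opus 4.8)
The plan is to prove the median trick by a standard concentration argument on the count of ``good'' samples. Define, for each $i \in \{1,\dots,n\}$, the indicator $B_i := \mathbf{1}[X_i \in (z-\delta,z+\delta)]$, and let $N := \sum_{i=1}^n B_i$ be the number of samples that land in the target interval. The hypothesis says $\pr{B_i = 1} \geq 0.8$ for every $i$, and the $X_i$ are independent, so the $B_i$ are independent Bernoulli random variables (not necessarily identical, but each with success probability at least $0.8$).

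The key observation is that if \emph{strictly more than half} of the samples lie in $(z-\delta,z+\delta)$, then the median must also lie in that interval: the median is the middle order statistic (or average of the two middle ones), and an interval containing a strict majority of the points necessarily contains that middle position. Hence
\[
\pr{\median(X_1,\dots,X_n) \notin (z-\delta,z+\delta)} \leq \pr{N \leq n/2}.
\]
So it suffices to bound $\pr{N \leq n/2}$. Since $\ex{N} \geq 0.8n$, the event $\{N \leq n/2\}$ is a lower-tail deviation of $N$ below $0.625\,\ex{N}$ (or more simply, below $\ex{N} - 0.3n$), and I would apply a multiplicative or additive Chernoff bound for sums of independent Bernoulli variables. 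Using the additive Hoeffding/Chernoff form, $\pr{N \leq \ex{N} - t} \leq \exp(-2t^2/n)$; taking $t = 0.3n$ gives $\pr{N \leq 0.5n} \leq \exp(-0.18n)$, which is exactly the claimed bound. (One can also route through the standard relative Chernoff bound $\pr{N \leq (1-\theta)\ex{N}} \leq \exp(-\theta^2 \ex{N}/2)$ with $\ex{N}\geq 0.8n$ and $\theta$ chosen so that $(1-\theta)\ex{N} = n/2$; this yields a constant in the exponent of the same order, and one then notes $0.18$ works.)

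I do not expect a genuine obstacle here; this is a routine argument. The only point requiring a little care is the first inequality: that failure of the median implies at most half the samples are good. One must handle the even-$n$ case (where the median is the average of the two central order statistics) — but even there, if $N > n/2$ then at least $\lceil n/2 \rceil + 1 > n/2$ points, hence both central order statistics, lie in the open interval, so their average does too; conversely if the median is outside, then at least half the points are on the far side of at least one endpoint, giving $N \leq n/2$. The other mild subtlety is that the $B_i$ may be non-identically distributed, so one should invoke a Chernoff/Hoeffding bound in the form valid for independent-but-not-identical Bernoulli summands (equivalently, bound $N$ below by a $\mathrm{Binomial}(n,0.8)$ variable in the stochastic order and apply the i.i.d. bound); this is standard and changes nothing.
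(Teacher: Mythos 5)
Your proposal is correct and follows essentially the same argument as the paper's proof: the paper counts the bad samples $S_n = \sum_i \mathbf{1}[X_i \notin (z-\delta, z+\delta)]$ rather than your good samples $N$, observes the median fails only if $S_n > n/2$, and applies the same additive Hoeffding bound with $t = 0.3n$ to obtain $\exp(-0.18n)$. Your extra care about the even-$n$ median and non-identical Bernoulli summands is fine but not needed beyond what the paper already implicitly handles.
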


By invoking \Cref{lem:median}, we can implement the \textsc{Q-NestExpect} algorithm of \Cref{thm:our} by:

\noindent\fbox{\begin{minipage}{\dimexpr\textwidth-2\fboxsep-2\fboxrule\relax}
    Run \textsc{Q-NestExpect-0.8} (\Cref{alg:quantum_nested_expectation}) independently $\ceil{\log(1/\delta)/{0.18}}$ times, and return the median of the resulting outputs.
\end{minipage}}

Proving \Cref{thm:our} based on \Cref{prop:qnest} is given in \Cref{subsec:proof-median}.

%#########################################################################

\subsection{Our algorithm}
\label{subsec:Q-NestExpect-0.8}

To introduce our algorithm, we first define in \Cref{alg:Cl} a sequence of auxiliary algorithms $\calB_l$.

\begin{algorithm}[htbp]
    \caption{$\calB_l(x)$ when $l\geq 0$}
    \label{alg:Cl}
    \begin{algorithmic}[1]
        \STATE \textbf{Input:} $x$ from $X$.
        \STATE Apply quantum-accelerated Monte Carlo \Cref{thm:quantum-mean} on $\gen_Y(x)$ and $\phi$ to estimate 
            \[\ex_{Y \mid X=x}{\phi(X,Y)}\]
        with accuracy $2^{-(l+1)}/K$ and success probability at least $1-2^{-(2l+1)} (4K^2V)^{-1}$.
        \STATE Clip the estimate into the region $\bc[\big]{-\sqrt{V}, \sqrt{V}}$ and denote the result by $\ha{\phi}_l(x)$.
        \STATE \textbf{Output:} $g(x, \ha{\phi}_l(x))$.  
    \end{algorithmic}
\end{algorithm}

Algorithm $\calB_l$ with input $x$ estimates $g\pt{x, \ex_{Y \mid X=x}{\phi(X,Y)}}$. Raising the value of $l$ will improve the accuracy but also elevate the computational cost. The typical way we obtain input $x$ for Algorithm~$\calB_l$ is by generating it according to $X$ using $\gen_X$. We refer to this combined process as $\calB_l(X)$. The sequence $\{\calB_l(X)\}_{l \geq 0}$ produces random variables whose expectations progressively approximate the target $\ex{g(X,\ex{\phi(X,Y)})}$. The properties of $\calB_l$ are studied in \Cref{lemma:Cl properties}.

Now we define $\calA_l$ based on $\calB_l$. When $l = 0$, we set $\calA_0 := \calB_0 (X)$. When $l \geq 1$, we define $\calA_l$ in \Cref{alg:Al}.

\medskip

\begin{algorithm}[H]
    \caption{$\calA_l$ when $l\geq 1$}
    \label{alg:Al}
    \begin{algorithmic}[1]
        \STATE Sample $x$ from $X$ using $\gen_X$.
        \STATE Apply $\calB_l(x)$ to obtain  $g(x, \ha{\phi}_l(x))$. 
        \STATE Apply $\calB_{l-1}(x)$ to obtain  $g(x, \ha{\phi}_{l-1}(x))$.
        \STATE \textbf{Output:} $\Delta_l := g(x, \ha{\phi}_l(x)) - g(x, \ha{\phi}_{l-1}(x))$. 
    \end{algorithmic}
\end{algorithm}

Algorithm $\calA_l$ is a ``coupled'' difference of $\calB_l$ and $\calB_{l-1}$. It executes $\calB_l$ and $\calB_{l-1}$ using the same input $x$. Compared to independently executing $\calB_l$ and $\calB_{l-1}$, this coupled implementation maintains the same expectation but ensures that the variance of $\Delta_l$ decreases to zero as $l$ increases, which is beneficial for our objective. The properties of $\calA_l$, and of its output $\Delta_l$, are studied in \Cref{lemma:Al properties}.

Now we are ready to describe our solution to \Cref{prop:qnest} in \Cref{alg:quantum_nested_expectation}.

\begin{algorithm}[H]
    \caption{\textsc{Q-NestExpect-0.8}}
    \label{alg:quantum_nested_expectation}
    \begin{algorithmic}[1]
        \STATE \textbf{Input:} Accuracy level $\eps$.
        \STATE Set $L = \ceil{\log_2(2/\eps)}$.
        \FOR{$l=0$ to $L$}
        \STATE Apply quantum-accelerated Monte Carlo \Cref{thm:quantum-mean} on $\calA_l$ to estimate~$\ex{\Delta_l}$ with accuracy $\eps/(2L+2)$ and success probability at least $1-0.1^{l+1}$. Denote the output by $\ha{\delta}_l$.
        \ENDFOR
        \STATE \textbf{Output:} $\ha{\mu} := \sum_{l=0}^L \ha{\delta}_l$. 
    \end{algorithmic}
\end{algorithm}

\textsc{Q-NestExpect-0.8} is a quantum-inside-quantum algorithm as it uses quantum-accelerated Monte Carlo to estimate the expectation of another quantum-accelerated Monte Carlo algorithm. In each iteration of the for loop in \textsc{Q-NestExpect-0.8}, we use a distinct quantum-accelerated Monte Carlo algorithm to individually estimate $\ex{\Delta_0}, \ex{\Delta_1}, \ldots, \ex{\Delta_L}$ and then sum the results. 

\paragraph{Proof Sketch:} To demonstrate that \Cref{alg:quantum_nested_expectation} satisfies the guarantees stated in \Cref{prop:qnest}, we first prove a few lemmas studying the properties of $\calB_l$ and $\calA_l$. The crucial property we will use from \cite{kothari2023mean} is as follows:

\begin{theorem}[Theorem 1.1 in \cite{kothari2023mean}]
    \label{thm:quantum-mean}
    Given the access to an algorithm which outputs a random variable with mean $\mu$ and variance $\sigma$, there is a quantum algorithm that makes $\bo{(\sigma/\eps) \log(1/\delta)}$ calls of the above algorithm, and outputs an estimate $\ha{\mu}$ of $\mu$. The estimate is $\eps$-close to $\mu$, that is $\abs{\ha{\mu} - \mu} \leq \eps$, with probability at least $1-\delta$. 
\end{theorem}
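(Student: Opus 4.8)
\textbf{Proof proposal for \Cref{thm:quantum-mean} (Theorem 1.1 of \cite{kothari2023mean}).}

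The plan is to establish the bound in two stages: first, reduce the variance-scaling problem to the case of a bounded random variable (so that amplitude-estimation-type machinery applies cleanly), and second, invoke a quantum amplitude estimation subroutine with the standard median-of-means boosting to obtain the $\log(1/\delta)$ dependence. Concretely, given access to the sampling algorithm $\calA$ producing a random variable $Z$ with $\ex{Z} = \mu$ and $\var{Z} = \sigma^2$, I would first rescale and consider $W := Z/\sigma$, which has unit variance, so it suffices to estimate $\ex{W}$ to additive error $\eps/\sigma$ using $\bo{(1/(\eps/\sigma))\log(1/\delta)} = \bo{(\sigma/\eps)\log(1/\delta)}$ queries.

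For the unit-variance case, the key step is a \emph{truncation} argument: split $W = W\mathbf{1}[|W| \le t] + W\mathbf{1}[|W| > t]$ for a threshold $t = \Theta(\sqrt{\text{queries}})$. The tail part contributes negligibly in expectation because $\ex{|W|\mathbf{1}[|W| > t]} \le \var{W}/t = 1/t$ by the second-moment bound, so choosing $t$ proportional to the number of queries makes this bias smaller than the target error. For the bounded part, supported on $[-t,t]$, I would apply a quantum mean estimation primitive for bounded random variables — realized via quantum amplitude estimation applied to a suitably discretized/encoded version of $W/t \in [-1,1]$ — which estimates the mean to additive error $\eta$ using $\wbo{1/\eta}$ queries; rescaling, estimating $\ex{W\mathbf{1}[|W|\le t]}$ to error $\eps/\sigma$ costs $\wbo{t\sigma/\eps}$ queries, and balancing $t$ against the tail bound yields the stated $\wbo{\sigma/\eps}$ scaling. (In the actual paper of \cite{kothari2023mean} the logarithmic factors are removed by a more careful analysis, but for the purposes of this paper the $\wbo{\cdot}$ version suffices; I flag that the tilde-free statement requires their sharper argument.)

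Finally, to achieve the $\log(1/\delta)$ confidence dependence rather than a $\mathrm{poly}(1/\delta)$ one, I would run the constant-success-probability estimator $\bo{\log(1/\delta)}$ times independently and take the median, exactly as in \Cref{lem:median}: each run lands in the target interval with probability at least, say, $2/3$, and a Chernoff bound over $\bo{\log(1/\delta)}$ independent trials drives the failure probability below $\delta$. Multiplying the per-run query count $\bo{\sigma/\eps}$ by the $\bo{\log(1/\delta)}$ repetitions gives the claimed total of $\bo{(\sigma/\eps)\log(1/\delta)}$ queries.

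\textbf{Main obstacle.} The hard part is the quantum mean-estimation primitive for bounded random variables with the correct $\wbo{1/\eta}$ (ideally $\bo{1/\eta}$) dependence: naively, amplitude estimation gives the mean of a $\{0,1\}$-valued indicator, and extending it to a real-valued bounded variable on $[-1,1]$ with optimal error — and, crucially, \emph{without} accumulating extra logarithmic factors from a binary-expansion reduction — is the technically delicate core of \cite{kothari2023mean}. The truncation bias analysis and the median boosting are comparatively routine; the subtlety is controlling the interaction between the truncation threshold $t$ and the amplitude-estimation error so that the bias from the discarded tail and the statistical error from the bounded part are simultaneously $\bo{\eps/\sigma}$ within the target query budget.
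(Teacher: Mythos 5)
First, note that the paper does not prove \Cref{thm:quantum-mean} at all: it is imported verbatim as Theorem~1.1 of \cite{kothari2023mean} and used as a black box, so there is no internal proof to compare against. Your sketch is therefore best judged on its own merits, and what it describes is not the Kothari--O'Donnell argument but the older truncation-plus-amplitude-estimation route of \cite{montanaro2015quantum,hamoudimagniez19}; as you correctly flag, that route can only give $\wbo{\sigma/\eps}$ rather than the tilde-free bound (which would in fact suffice for every use of the theorem in this paper, since all downstream costs are stated with $\wbo{\cdot}$). The genuinely different ingredient in \cite{kothari2023mean} is that they avoid truncation entirely: after an $\bo{1}$-sample classical centering step they run phase estimation on a single Grover-type unitary whose eigenphases encode the mean, and the analysis of that unitary is what removes the logarithmic overhead.

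Second, even as a proof of the weaker $\wbo{\cdot}$ statement your sketch has two gaps. (i) The tail bound $\ex{\abs{W}\mathbf{1}[\abs{W}>t]} \le \var{W}/t$ is false as written: Markov applied to the second moment gives $\ex{W^2}/t$, and $\ex{W^2} = 1 + \mu^2/\sigma^2$ can be arbitrarily larger than $\var{W}=1$. You must first center $W$ using a rough classical estimate of $\mu$ obtained from $\bo{1}$ samples; this preprocessing is what licenses the assumption $\ex{W^2} = \bo{1}$. (ii) More seriously, a single truncation threshold does not balance. Controlling the tail bias at level $\eps/\sigma$ forces $t \gtrsim \sigma/\eps$; amplitude estimation of a $[-t,t]$-valued variable to additive error $\eps/\sigma$ then costs $\bo{t\sigma/\eps} = \bo{\sigma^2/\eps^2}$ queries with the naive error bound, and still $\bo{(\sigma/\eps)^{3/2}}$ even with the refined $\sqrt{p}/N$ form of the amplitude-estimation error, since the best a priori bound on the truncated mean is $\ex{\abs{W}}/t \le 1/t$. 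The standard fix is a dyadic decomposition of $W$ into $\bo{\log(\sigma/\eps)}$ scales $[2^{k},2^{k+1})$, estimating each scale's contribution separately and using the variance bound to certify that the mass at scale $2^{k}$ is at most $2^{-2k}$; this multi-scale step is exactly where the extra polylogarithmic factors enter, and it is the piece missing from your balancing argument. The concluding median-of-means boosting is fine and matches \Cref{lem:median}.
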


Now we study the computational and statistical properties of $\calB_l$. The formal statements of the next two lemmas can be found in \Cref{subsec:proof-cl,subsec:proof-al}.

\begin{lemma}[Informal]
    \label{lemma:Cl properties}
    For every $l\geq 0$, with $\calB_l$ defined in \Cref{alg:Cl}, we have:
    \begin{enumerate}
        \item \textbf{Cost:} For all $x$, the cost of implementing $\calB_l(x)$ is at most $\wbo{2^l K \sqrt{V}}$
        \item \textbf{Mean-squared error and Bias:} For all $x$, the mean-squared error and the bias of the output of~$\calB_l(x)$ with respect to $g(x,\ex_{Y \mid X=x}{\phi(X,Y)})$ are at most, respectively,~$2^{-2l}$ and~$2^{-l}$.
        \item \textbf{Variance:} The variance of the output of $\calB_l(X)$ is at most $2 + 2S$.
    \end{enumerate}    
\end{lemma}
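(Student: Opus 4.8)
The plan is to analyze each of the three bullets in turn, tracing the parameters chosen in Algorithm~\ref{alg:Cl} through Theorem~\ref{thm:quantum-mean} and the clipping step.

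For the \textbf{cost}, I would invoke Theorem~\ref{thm:quantum-mean} applied to the algorithm $\gen_Y(x)$ composed with $\phi$: by Assumption~2 the random variable $\phi(x,Y)$ has second moment at most $V$, so its standard deviation is at most $\sqrt V$. Running the quantum mean estimator at accuracy $2^{-(l+1)}/K$ and failure probability $2^{-(2l+1)}(4K^2V)^{-1}$ therefore costs $O\!\big((\sqrt V \cdot 2^{l+1} K)\log(2^{2l+1}\cdot 4K^2V)\big) = \widetilde O(2^l K\sqrt V)$ oracle calls, each of unit cost by Assumptions~4--5; the clipping and the final evaluation of $g$ add only $O(1)$. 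This gives bullet~1.

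For the \textbf{mean-squared error and bias}, write $\gamma(x) := \ex_{Y\mid X=x}{\phi(x,Y)}$ and let $Z_l$ be the raw quantum estimate, $\ha\phi_l(x)$ its clipping to $[-\sqrt V,\sqrt V]$. The key points are: (i) by Jensen / the variance assumption, $|\gamma(x)| \le \sqrt V$, so clipping only moves the estimate \emph{closer} to $\gamma(x)$, i.e.\ $|\ha\phi_l(x) - \gamma(x)| \le |Z_l - \gamma(x)|$ always; (ii) on the ``good'' event $E$ (probability $\ge 1 - 2^{-(2l+1)}(4K^2V)^{-1}$) we have $|Z_l - \gamma(x)| \le 2^{-(l+1)}/K$; (iii) on the complementary ``bad'' event, $|\ha\phi_l(x) - \gamma(x)| \le 2\sqrt V$ deterministically because both terms lie in $[-\sqrt V,\sqrt V]$. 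Then, using the $K$-Lipschitz property of $g$ (Assumption~1), $|g(x,\ha\phi_l(x)) - g(x,\gamma(x))| \le K|\ha\phi_l(x)-\gamma(x)|$, so the squared error is at most $(K\cdot 2^{-(l+1)}/K)^2 = 2^{-2l-2}$ on $E$ and at most $(2K\sqrt V)^2 = 4K^2V$ on its complement. Taking expectations,
\[
\ex{(g(x,\ha\phi_l(x)) - g(x,\gamma(x)))^2} \;\le\; 2^{-2l-2} + 4K^2V\cdot 2^{-(2l+1)}(4K^2V)^{-1} \;=\; 2^{-2l-2} + 2^{-2l-1} \;\le\; 2^{-2l},
\]
which is the MSE bound; the bias bound $2^{-l}$ follows since bias$^2 \le$ MSE, or more directly by the same split $|\ex{g(x,\ha\phi_l(x))} - g(x,\gamma(x))| \le 2^{-l-1} + 2K\sqrt V\cdot 2^{-(2l+1)}(4K^2V)^{-1}/(2\sqrt V)\cdot(2\sqrt V)$, which one checks is at most $2^{-l}$. (The constants in Algorithm~\ref{alg:Cl} are evidently chosen to make exactly this arithmetic work.)

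For the \textbf{variance}, I would use that $\var{g(X,\ha\phi_l(X))} \le \ex{(g(X,\ha\phi_l(X)) - \ex{g(X,\gamma(X))})^2}$ and then split via $(a+b)^2 \le 2a^2 + 2b^2$ with $a = g(X,\ha\phi_l(X)) - g(X,\gamma(X))$ and $b = g(X,\gamma(X)) - \ex{g(X,\gamma(X))}$: the first term has expectation $\le 2\cdot\sup_x 2^{-2l} \le 2$ by bullet~2, and the second has expectation $2\var_X{g(X,\gamma(X))} \le 2S$ by Assumption~3, giving $2 + 2S$.

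The main obstacle — really the only subtle point — is bullet~2: one must be careful that the clipping step is \emph{harmless} (it never increases the distance to the true conditional mean, which hinges on $|\gamma(x)|\le\sqrt V$ following from Assumption~2 plus Jensen), and that the rare large-deviation event contributes a controlled amount to the MSE rather than being ignored; the quantitative failure probability $2^{-(2l+1)}(4K^2V)^{-1}$ in Algorithm~\ref{alg:Cl} is precisely calibrated so that this bad-event contribution is $\le 2^{-2l-1}$. Everything else is bookkeeping with the parameters.
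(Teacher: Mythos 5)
Your proposal is correct and follows essentially the same route as the paper's proof: the cost bound via Theorem~\ref{thm:quantum-mean} with the extra $\log$ factor for the boosted success probability, the observation that clipping to $[-\sqrt V,\sqrt V]$ never increases the distance to $\ex_{Y\mid X=x}{\phi(X,Y)}$ (since that quantity lies in $[-\sqrt V,\sqrt V]$ by Jensen), the good-event/bad-event split calibrated so the bad event contributes $2^{-2l-1}$ to the MSE, the bias via $\mathrm{bias}^2\le\mathrm{MSE}$, and the variance via $\var{W}\le\ex{(W-a)^2}$ plus $(a+b)^2\le 2a^2+2b^2$. The only quibble is that your parenthetical ``more direct'' bias computation is garbled and would need an extra condition on $K\sqrt V$ to close, but your primary argument ($\mathrm{bias}\le\sqrt{\mathrm{MSE}}$) is exactly the paper's and suffices.
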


Similarly, we study the properties of $\calA_l$.

\begin{lemma}[Informal]
    \label{lemma:Al properties}
    For any $l \geq 0$, with $\calA_l$ defined in \Cref{alg:Al}, we have:
    \begin{enumerate}
         \item \textbf{Cost:} The cost of implementing $\calA_l$ is at most $\wbo{2^l K \sqrt{V}}$.
        \item \textbf{Bias:} The bias of $\sum_{k=0}^l \Delta_k$ with respect to \Cref{eqn:nested expectation} is at most $2^{-l}$.
        \item \textbf{Variance:} The output $\Delta_l$ of $\calA_l$ has variance at most $2+2S$ when $l = 0$ and $10 \times 2^{-2l}$ when $l \geq 1$.
    \end{enumerate}
\end{lemma}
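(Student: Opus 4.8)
\textbf{Proof plan for \Cref{lemma:Al properties}.} The plan is to derive all three items directly from the corresponding items of \Cref{lemma:Cl properties}, exploiting the fact that \Cref{alg:Al} is nothing but one call to $\gen_X$ followed by a \emph{coupled} pair of calls $\calB_l(x)$, $\calB_{l-1}(x)$ on the same sample $x$ (with $\calA_0 := \calB_0(X)$).

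\emph{Cost.} I would simply add the costs of the three lines of \Cref{alg:Al}: the call to $\gen_X$ costs $\bo{1}$ by Assumption 5, and by \Cref{lemma:Cl properties}(1) the calls $\calB_l(x)$ and $\calB_{l-1}(x)$ cost $\wbo{2^l K\sqrt V}$ and $\wbo{2^{l-1} K\sqrt V}$ respectively. Summing gives $\wbo{2^l K\sqrt V}$, and the $l=0$ case is just $\calB_0(X)$, which obeys the same bound.

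\emph{Bias.} Write $\mu$ for the nested expectation \eqref{eqn:nested expectation}, and let $t_k := \ex{g(X,\ha{\phi}_k(X))}$ be the expected output of $\calB_k(X)$, the expectation being over both $X\sim\bP_X$ and the internal randomness of $\calB_k$. Even though $\calA_k$ reuses the same $x$ inside $\calB_k$ and $\calB_{k-1}$ for $k\ge 1$, linearity of expectation still gives $\ex{\Delta_0}=t_0$ and $\ex{\Delta_k}=t_k-t_{k-1}$, so the sum telescopes: $\sum_{k=0}^l\ex{\Delta_k}=t_l$. It remains to bound $\abs{t_l-\mu}$. Writing $\mu=\ex_X{g(X,\ex_{Y\mid X}{\phi(X,Y)})}$ and pulling the absolute value inside the outer expectation (Jensen), I would invoke the \emph{pointwise} bias bound of \Cref{lemma:Cl properties}(2), namely $\abs{\ex{g(x,\ha{\phi}_l(x))}-g(x,\ex_{Y\mid x}{\phi(x,Y)})}\le 2^{-l}$ for every $x$, to conclude $\abs{t_l-\mu}\le\ex_X{2^{-l}}=2^{-l}$.

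\emph{Variance.} For $l=0$ this is exactly \Cref{lemma:Cl properties}(3). For $l\ge 1$ the key step is to re-center both coupled estimates around their common exact target: with the shared $x$, set $A_k := g(x,\ha{\phi}_k(x))-g(x,\ex_{Y\mid x}{\phi(x,Y)})$, so that $\Delta_l = A_l - A_{l-1}$. Then $\var{\Delta_l}\le\ex{\Delta_l^2}$, and Minkowski's inequality gives $\sqrt{\ex{\Delta_l^2}}\le\sqrt{\ex{A_l^2}}+\sqrt{\ex{A_{l-1}^2}}$. Since $\ex{A_k^2}=\ex_X{\ex{A_k^2\mid X}}\le\ex_X{2^{-2k}}=2^{-2k}$ by the \emph{pointwise} mean-squared-error bound of \Cref{lemma:Cl properties}(2), we get $\sqrt{\ex{\Delta_l^2}}\le 2^{-l}+2^{-(l-1)}=3\cdot 2^{-l}$ and hence $\var{\Delta_l}\le 9\cdot 2^{-2l}\le 10\cdot 2^{-2l}$.

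I expect the only non-mechanical step to be the variance bound for $l\ge1$. The naive estimate $\var{\Delta_l}\le 2\var{g(X,\ha{\phi}_l(X))}+2\var{g(X,\ha{\phi}_{l-1}(X))}$ is $\Theta(1)$ and never decays, so the argument must use that \Cref{alg:Al} couples the two levels through a single $x$ and that $\ha{\phi}_l(x)$ and $\ha{\phi}_{l-1}(x)$ both concentrate near the \emph{same} value $\ex_{Y\mid x}{\phi(x,Y)}$; recentering around that common value and applying the $L^2$ triangle inequality is what makes the cancellation quantitative. A minor point to keep in mind is that the estimates from \Cref{lemma:Cl properties} hold pointwise in $x$, so integrating them against the law of $X$ (and interchanging expectation with $\abs{\cdot}$ via Jensen) is legitimate and loses nothing.
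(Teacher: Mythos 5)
Your proposal is correct and follows essentially the same route as the paper: sum the costs of the two coupled calls, telescope the expectations and integrate the pointwise bias bound of $\calB_l$ over $X$, and for the variance recenter both estimates around the common target $g(x,\ex_{Y \mid X=x}{\phi(X,Y)})$ before invoking the pointwise MSE bound. The only (immaterial) difference is that you close the variance argument with Minkowski's inequality, giving $9\times 2^{-2l}$, whereas the paper uses $(a-b)^2 \le 2a^2+2b^2$ to get $10\times 2^{-2l}$.
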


By applying \Cref{lemma:Al properties} and ignoring lower-order terms, our new sequence $\{\calA_l\}_{l \geq 0}$ corresponds to $\alpha = 1, \beta = 2, \gamma = 1+o(1)$ in the MLMC framework, whereas the standard classical sequence (\Cref{alg:nested-classicalMLMC-Al}) has $\alpha = 0.5, \beta = 1, \gamma = 1$. This improvement arises from our new quantum subroutine. It eventually yields \Cref{prop:qnest}; that result, in turn, establishes our main \Cref{thm:our}, demonstrating an overall complexity of $\wbo{1/\eps}$ when quantum-accelerated Monte Carlo is used to estimate each $\ex{\Delta_l}$. The complete proof of \Cref{prop:qnest} appears in \Cref{subsec:proof-nest-0.8}.

\textbf{Optimality of our algorithm:} Our algorithm achieves the optimal dependence on $\eps$, up to polylogarithmic factors. In particular, when $g(x, z) := z$ (which is 1-Lipschitz), the nested expectation in \Cref{eqn:nested expectation} simplifies to $\ex{\phi}$. Thus our problem includes standard Monte Carlo as a special case. Moreover, from known lower bounds on the quantum complexity of mean approximation \citep{nayak1999quantum,hamoudi2021gaussian}, estimating the mean with error $\eps$ requires at least $\om{1/\eps}$ operations. Our algorithm matches this dependence while extending to a broader class of problems. 

%#########################################################################

\section{Applications}
\label{sec:application}

We now revisit the examples from \Cref{subsec: nested expectation} and analyze the cost of applying our algorithm.

\textbf{Bayesian experiment design:} Recall our target is
    \[\ex_X{\log(\ex_Y{\pr{X \given Y}})} - \ex_Y{\ex_{X \mid Y}{\log(\pr{X \given Y})}}.\]

The first term matches \Cref{eqn:nested expectation} with
\(\phi(x,y):=\pr{X{=}x \given Y{=}y}\) and \(g(x,z):=\log z\).
Assume \(\pr{X{=}x \given Y{=}y}\in[c,1]\) for all \((x,y)\) (e.g., when \(X\) is finite with strictly positive mass).
Then Assumption~1 holds with \(K=1/c\) because the $\log$ function is \((1/c)\)-Lipschitz on \([c,1]\).
Assumption~2 holds with \(V=1\) since \(0\le \pr{X{=}x \given Y{=}y}\le 1\).
Assumption~3 holds with \(S=\log^2(c)\) because \(g(x,\cdot)=\log(\cdot)\in[\log c,0]\). Therefore, Theorem \ref{thm:our} implies that \textsc{Q-NestExpect} estimates the first term within $\epsilon$ error with cost $\wbo{\abs{\log(c)} c^{-1}\log(1/\delta)/\eps}$.

\textbf{EVPPI:} In the EVPPI example, our target quantity is
    \[\ex_X[\Big]{\max_{k\in \set{1,\ldots,d}}\ex_{Y \mid X}{f_k(X,Y)}} - \max_{k\in \set{1,\ldots,d}} \ex{f_k(X,Y)}.\]
This problem requires a little extra care because the inner function is vector-valued in this case: $\phi(x,y) = \pt{f_1(x,y),\dots,f_d(x,y)}$ and $g(x,z) = \max(z_1, \dots, z_d)$. Ignoring the dependence on the dimension~$d$, and adapting our estimator to the multidimensional case, one can estimate the first term at a cost of \(\wbo{V/\eps}\) assuming, for instance, that the family of functions \(f_k\) is uniformly bounded~$V$.

\textbf{CoC option pricing:} We have $g(x,z) = \max\{z - k, 0\}$ with $\phi = g$. Then, Assumption~1 holds with~\(K=1\). Assuming that both $X,Y$ are bounded between $[-a, b]$ where $a,b \geq 0$, and the strike price is $k > 0$, then Assumptions~2,~3 hold with $S = V = \max\{b-k, a+k\}^2$. \Cref{thm:our} implies that \textsc{Q-NestExpect} estimates the CoC option price $\epsilon$ error with cost $\wbo{\max\{b-k, a+k\}^2\log(1/\delta)/\eps}$.

\textbf{Conditional stochastic optimization:} 
Recall that the goal is to find an $x$ that minimizes,
    \[F(x) := \ex_{\xi}{f(\ex_{\eta\mid\xi}{g_{\eta}(x,\xi)})}.\]
Assuming the decision set is finite, $\mathcal X=\{x_1,\dots,x_N\}$, and that each nested expectation $F(x_i)$ satisfies our Assumptions 1--5, then \textsc{Q-NestExpect} returns an additive-$\epsilon$ estimate of $F$ at cost $\wbo{K\sqrt{SV}/\epsilon}$. This provides a function-value oracle for $F$. Using quantum minimum finding~\citep{durr1996quantum}, the cost to output an \(\hat x\) with \(F(\hat x)\le \min_i F(x_i)+\varepsilon\) is \(\wbo{K\sqrt{SVN}/\epsilon}\).

%#########################################################################
%#########################################################################

\section{Limitations and Future directions}
\label{sec:future}

First, although our algorithm achieves a near-optimal cost for estimating \Cref{eqn:nested expectation} among quantum algorithms, our results are derived under fault-tolerance assumptions and thus remain difficult to realize on current NISQ hardware \citep{huang2024evaluating}. Second, the original QA-MLMC algorithm applies to a wider range of scenarios than estimating the nested expectation. Thus, our contribution should be seen as an improvement of QA-MLMC within a specific yet critical set of problems.

Nevertheless, our work highlights how quantum computing can give MLMC much more freedom to invent novel approximation sequences. In favorable scenarios, like ours, a carefully crafted sequence can lead to a significantly improved (and potentially optimal) algorithm. This raises an intriguing question: which other MLMC applications could achieve a similar quantum quadratic speedup?

We expect that our technique can be extended to closely related problems. For instance, with additional assumptions, classical MLMC also has $\wbo{1/\eps^2}$ cost when $g(x,z) = g(z) = \mathbf{1}_{z\geq 0}$ is a Heaviside function \citep{giles2019multilevel,gordy2010nested}. This quantity plays a key role in computing Value-at-Risk (VaR) and Conditional Value-at-Risk (CVaR), which are widely used in finance.  Similarly, MLMC has been shown to be effective in  conditional stochastic optimization, bilevel optimization problems \citep{hu2020biased,hu2021bias, hu2024contextual}, which is an extension of our setting with additional optimization steps. We expect our approach can be adapted to address these problems.

Our primary focus is the dependence on $\eps$, which is also central to both the classical \citep{giles2015multilevel} and quantum MLMC \citep{an2021quantum} literature. However, our problem also depends on other parameters, such as the Lipschitz constant and the variance. It can readily be extended to vector-valued random variables by using a quantum multivariate estimator \citep{cornelissen2022near}, thereby incurring a dependence on the dimension as well. An interesting open question is whether our current algorithm is optimal with respect to these parameters, or whether improvements are possible.

%#########################################################################
%#########################################################################

\section*{Acknowledgement}

J. Blanchet, M. Szegedy, and G. Wang acknowledge support from grants NSF-CCF-2403007 and NSF-CCF-2403008.
Y. Hamoudi is supported by the Maison du Quantique de Nouvelle-Aquitaine ``HybQuant'', as part of the HQI initiative and France 2030, under the French National Research Agency award number ANR-22-PNCQ-0002.

%#########################################################################
%#########################################################################

\newpage
\medskip
\phantomsection
\addcontentsline{toc}{section}{References}
\bibliographystyle{chicago}
\bibliography{bibliography}

%#########################################################################
%#########################################################################

\newpage
\appendix

\section{Proofs for the Main Algorithm}
\label{sec:proof}

\subsection{\texorpdfstring{Analysis of the $\calB_l$ algorithms}{Analysis of the Bl algorithms}}
\label{subsec:proof-cl}

\begin{lemma*}[Formal statement of \Cref{lemma:Cl properties}]
    The algorithm $\calB_l$ defined in \Cref{alg:Cl} for all $l \geq 0$ has the following properties:
    \begin{enumerate}
        \item {\bf (Cost)} For any input $x$, the cost of implementing $\calB_l(x)$ is $\wbo{2^l K \sqrt{V}}$.
        \item {\bf (MSE)} For any input $x$, the mean-squared error (MSE) of the output $g(x,\ha{\phi}_l(x))$ of $\calB_l(x)$ with respect to $g(x,\ex_{Y \mid X=x}{\phi(X,Y)})$ is at most,
            \[\ex[\Big]{\abs[\big]{g(x,\ha{\phi}_l(x))  -  g(x,\ex_{Y \mid X=x}{\phi(X,Y)})}^2} \leq  2^{-2l}.\]
        \item {\bf (Bias)} For any input $x$, the bias of the output $g(x,\ha{\phi}_l(x))$ of $\calB_l(x)$ with respect to $g(x,\ex_{Y \mid X = x}{\phi(X,Y)})$ is at most,
            \[\abs[\big]{\ex{g(x,\ha{\phi}_l(x))} - g(x,\ex_{Y \mid X = x}{\phi(X,Y)})} \leq 2^{-l}.\]
        \item {\bf (Variance)} The variance of the output of $\calB_l$ when the input is sampled according to $X$ is at most
            \[\var_X{g(X,\ha{\phi}_l(X))} \leq 2+2S.\]
    \end{enumerate}               
\end{lemma*}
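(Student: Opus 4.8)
The plan is to establish the four properties in sequence, since each builds naturally on the quantitative guarantees of the quantum-accelerated mean estimator (\Cref{thm:quantum-mean}) applied in line 2 of \Cref{alg:Cl}. Throughout, let $\gamma(x) := \ex_{Y \mid X=x}{\phi(X,Y)}$ denote the true conditional expectation, and let $\td{\phi}_l(x)$ denote the raw output of the quantum mean estimator before clipping, so that $\ha{\phi}_l(x)$ is the projection of $\td{\phi}_l(x)$ onto $[-\sqrt V, \sqrt V]$.

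\textbf{Cost.} By Assumption 2, the random variable $\phi(X,Y)$ (for $Y \sim \bP(\cdot \mid X=x)$) has second moment at most $V$, hence standard deviation at most $\sqrt V$. \Cref{thm:quantum-mean} then says the number of calls to $\gen_Y(x)$ and $\phi$ needed to achieve accuracy $\eps_l := 2^{-(l+1)}/K$ with failure probability $\delta_l := 2^{-(2l+1)}(4K^2V)^{-1}$ is $\bo{(\sqrt V/\eps_l)\log(1/\delta_l)} = \bo{2^{l+1} K \log(2^{2l+1} \cdot 4K^2 V)}$, which is $\wbo{2^l K \sqrt V}$ after absorbing the logarithmic factor (the $\sqrt V$ is generous but harmless). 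Clipping and one query to $g$ add only $O(1)$, so the total is $\wbo{2^l K \sqrt V}$.

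\textbf{MSE and Bias.} The key observation is that clipping can only help: since $\gamma(x) = \ex_{Y\mid X=x}{\phi(X,Y)} \in [-\sqrt V, \sqrt V]$ by Assumption 2 and Jensen (or Cauchy--Schwarz), projection onto $[-\sqrt V,\sqrt V]$ is $1$-Lipschitz and fixes $\gamma(x)$, so $\abs{\ha\phi_l(x) - \gamma(x)} \le \abs{\td\phi_l(x) - \gamma(x)}$ pointwise. Combined with the $K$-Lipschitz property of $g$ (Assumption 1), $\abs{g(x,\ha\phi_l(x)) - g(x,\gamma(x))} \le K \abs{\ha\phi_l(x) - \gamma(x)} \le K\abs{\td\phi_l(x) - \gamma(x)}$. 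Now split on the success event $E$ of the quantum estimator ($\pr{E} \ge 1-\delta_l$): on $E$, $\abs{\td\phi_l(x)-\gamma(x)} \le \eps_l$, so the error is at most $K\eps_l = 2^{-(l+1)}$; off $E$, the error is crudely at most $K \cdot 2\sqrt V$ (since $\ha\phi_l(x), \gamma(x) \in [-\sqrt V,\sqrt V]$). For the MSE, $\ex{\abs{g(x,\ha\phi_l(x)) - g(x,\gamma(x))}^2} \le (2^{-(l+1)})^2 + \delta_l \cdot (2K\sqrt V)^2 = 2^{-2(l+1)} + 2^{-(2l+1)}(4K^2V)^{-1} \cdot 4K^2 V = 2^{-2l-2} + 2^{-2l-1} \le 2^{-2l}$, as claimed. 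For the bias, $\abs{\ex{g(x,\ha\phi_l(x))} - g(x,\gamma(x))} \le \ex{\abs{g(x,\ha\phi_l(x)) - g(x,\gamma(x))}} \le \sqrt{\ex{\abs{\cdots}^2}} \le 2^{-l}$ by Jensen (or directly: $\le 2^{-(l+1)} + \delta_l \cdot 2K\sqrt V \le 2^{-(l+1)} + 2^{-(2l+1)}/(2K\sqrt V) \le 2^{-l}$).

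\textbf{Variance.} Here $X$ is random. Write $g(X,\ha\phi_l(X)) = g(X,\gamma(X)) + R_l(X)$ where $R_l(X) := g(X,\ha\phi_l(X)) - g(X,\gamma(X))$ is the error term bounded in the MSE step. Using $\var{A+B} \le 2\var{A} + 2\var{B} \le 2\var{A} + 2\ex{B^2}$, we get $\var_X{g(X,\ha\phi_l(X))} \le 2\var_X{g(X,\gamma(X))} + 2\ex_X{R_l(X)^2}$. The first term is at most $2S$ by Assumption 3, and the second is at most $2 \cdot 2^{-2l} \le 2$ by the (conditional, hence marginal) MSE bound. This gives $2 + 2S$.

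I do not anticipate a genuine obstacle; the one point requiring care is the justification that $\gamma(x) \in [-\sqrt V, \sqrt V]$ so that clipping is benign — this is exactly $\abs{\ex_{Y\mid x}{\phi(x,Y)}} \le \sqrt{\ex_{Y\mid x}{\phi(x,Y)^2}} \le \sqrt V$ — and the bookkeeping that the chosen failure probability $\delta_l$ is precisely calibrated so that the ``bad event'' contribution $\delta_l \cdot (2K\sqrt V)^2$ to the MSE is $2^{-2l-1}$, keeping the total under $2^{-2l}$. Everything else is routine application of \Cref{thm:quantum-mean}, Lipschitzness, and elementary variance inequalities.
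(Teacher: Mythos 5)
Your proposal is correct and follows essentially the same route as the paper's proof: the same application of \Cref{thm:quantum-mean} for the cost, the same observation that clipping onto $[-\sqrt V,\sqrt V]$ cannot increase the error since $\abs{\ex_{Y\mid x}{\phi(x,Y)}}\le\sqrt V$, the same success/failure split calibrated so the bad-event contribution is $2^{-2l-1}$, and the same Jensen and $\var{A+B}\le 2\var{A}+2\ex{B^2}$ steps for the bias and variance. (The only blemishes are cosmetic: the dropped $\sqrt V$ in your cost display, which you flag yourself, and the parenthetical ``direct'' bias bound, which needs $K\sqrt V\ge 2^{-l-1}$ to close --- but your primary Jensen argument does not rely on it.)
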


\begin{proof}[Proof of \Cref{lemma:Cl properties}] 
We prove the properties separately. 

\noindent{\bf Cost of $\calB_l$.} Implementing Step 2--3 of \Cref{alg:Cl} incurs a computational cost of $\wbo{2^l K \sqrt{V}}$. To see this, note that  it would be sufficient to take $\bo{2^l K\sqrt{V}}$ samples to have success probability~$0.8$ using \cite{kothari2023mean}. When we increase the probability of success to $1-2^{-(2l+1)} (4K^2V)^{-1}$, we need to multiply this with a factor of $\bo{\log\pt{2^{2l+1}4K^2V}}$. Thus the total cost is $\wbo{2^l K \sqrt{V}}$.

\paragraph{MSE:} Let $\td{\phi}_l(x)$ be the output of Step 2 in $\calB_l(x)$, i.e., the output of the quantum-accelerated Monte Carlo algorithm in \cite{kothari2023mean} before clipping. We claim 
    \[\abs[\big]{\td{\phi}_l(x) - \ex_{Y \mid X=x}{\phi(X,Y)}} \geq \abs[\big]{\ha{\phi}_l(x) - \ex_{Y \mid X=x}{\phi(X,Y)}},\] 
i.e., clipping will never increase the error. To see this, notice that
    \[\ex_{Y \mid X=x}{\phi(X,Y)}\leq \sqrt{V}\]
because $\ex{Z}^2 \leq \ex{Z^2}$ for any random variable $Z$. Therefore, we know as a-priori that the target quantity of step 2 is between $-V$ and $V$. Suppose $\td{\phi}_{l}(x)$ is between $[-V, V]$, then $\td{\phi}_{l}(x) = \ha{\phi}_{l}(x) $, i.e., no clipping happens. Suppose $\td{\phi}_{l}(x) > V$, then 
\begin{align*}
    \abs[\big]{\td{\phi}_{l}(x) - \ex_{Y \mid X=x}{\phi(X,Y)}} 
    \geq \abs[\big]{\sqrt{V} - \ex_{Y \mid X=x}{\phi(X,Y)}}
    & = \abs[\big]{\ha{\phi}_l(x) - \ex_{Y \mid X=x}{\phi(X,Y)}}.
\end{align*}
Similar argument holds when $\td{\phi}_l(x) < -V$. Thus, the squared error either decreases or remains unchanged after clipping, establishing the claim.

Since $\td{\phi}_l(x)$ has accuracy $2^{-(l+1)}/K$ with probability at least $1-2^{-(2l+1)} (4K^2V)^{-1}$, we know~$\ha{\phi}_l(x)$ has the same property. Moreover, the absolute error between $\ha{\phi}_l(x)$ and $\ex_{Y \mid X=x}{\phi(X,Y)}$ can be no more than $2\sqrt{V}$ due to clipping. Thus, the expected squared error between $\ha{\phi}_l(x)$ and $\ex_{Y \mid X=x}{\phi(X,Y)}$ is upper bounded by 
\begin{align*}
  \ex[\Big]{\abs[\big]{\ha{\phi}_l(x)  -  \ex_{Y \mid X=x}{\phi(X,Y)}}^2} 
    & \leq  K^{-2} 2^{-2(l+1)} \times 1 + 4 V \times 2^{-(2l+1)}(4K^2V)^{-1} \\
    & \leq K^{-2} 2^{-2l}.
\end{align*}

For each fixed $x$, the  absolute error between $g(x, \ha{\phi}_l(x))$ and $g(x, \ex_{Y \mid x}{\phi(x,Y)})$ is not greater than
    \[K\abs{\ha{\phi}_l(x)  -  \ex_{Y \mid X=x}{\phi(X,Y)}}\]
due to the Lipschitzness of $g$. Therefore, we have the following estimate on the mean-squared error of $\calB_l$ with input $x$:
    \begin{equation}
        \label{eqn:cond-mse}
        \ex[\Big]{\abs[\big]{g(x,\ha{\phi}_l(x)) - g(x,\ex_{Y \mid X=x}{\phi(X,Y)})}^2} \leq  2^{-2l}.
    \end{equation}

\paragraph{Bias:}
The bias of the output of algorithm $\calB_l$ on input $x$, with respect to $g(x,\ex_{Y \mid X = x}{\phi(X,Y)})$, is
    \begin{align*}
        & \abs[\big]{\ex{g(x,\ha{\phi}_l(x))} - g(x,\ex_{Y \mid X = x}{\phi(X,Y)})} \\
        & \leq \ex[\Big]{\abs[\big]{g(x,\ha{\phi}_l(x)) - g(x,\ex_{Y \mid X = x}{\phi(X,Y)})}}\\
        & \leq \sqrt{\ex[\Big]{\abs[\big]{g(x,\ha{\phi}_l(x))  -  g(x,\ex_{Y \mid X=x}{\phi(X,Y)})}^2}}\\
        & \leq 2^{-l}.
    \end{align*}
Here the first two inequalities follows from $\abs{\ex{X} - a} \leq \ex{\abs{X - a}}\leq \sqrt{\ex{\abs{X - a}^2}}$, and the last inequality follows from \Cref{eqn:cond-mse}. 

\paragraph{Variance:} To simplify the notation, define $G(x)$ as $g(x, \ex_{Y \mid X = x}{\phi(X,Y)})$. Then Assumption $3$ reads $\var_X{G(X)} \leq S$. We bound the variance of the output of $\calB_l(X)$ as:
    \begin{align*}
        \var_X{g(X,\ha{\phi}_l(X))} &\leq \ex_X{\pt{g(X,\ha{\phi}_l(X)) - \ex_X{G(X)}}^2} \\
        & \leq 2\ex_X{\pt{g(X,\ha{\phi}_l(X)) - G(X)}^2} + 2\ex_X{(G(X) - \ex_X{G(X)})^2} \\
        & \leq 2\ex_{x \sim X}[\big]{\ex{\pt{g(x,\ha{\phi}_l(x)) - G(x)}^2}} + 2\var_X{(G(X)} \\
        & \leq 2 \cdot 2^{-2l}  + 2S \leq 2 + 2S,
    \end{align*}
where the first inequality uses the fact that $\var{X} = \min_a \ex{(X-a)^2}$ and the last inequality uses \Cref{eqn:cond-mse}.
\end{proof}

%#########################################################################

\subsection{\texorpdfstring{Analysis of the $\calA_l$ algorithms}{Analysis of the Al algorithms}}
\label{subsec:proof-al}

\begin{lemma*}[Formal statement of \Cref{lemma:Al properties}]
    The algorithm $\calA_l$ defined in \Cref{alg:Al} for all $l \geq 0$ has the following properties:
    \begin{enumerate}
        \item {\bf (Cost)} The cost of implementing $\calA_l$ is $\wbo{2^l K \sqrt{V}}$.
        \item {\bf (Bias)} The bias of the sum $\sum_{k=0}^l \Delta_k$ of outputs up to level $l$ is at most $2^{-l}$ with respect to $\ex_X{g(X,\ex_{Y \mid X}{\phi(X,Y)})}$, i.e.:
            \[\abs*{\sum_{k=0}^l \ex{\Delta_k} - \ex_X{g(X,\ex_{Y \mid X}{\phi(X,Y)})}}\leq 2^{-l}.\]
        \item {\bf (Variance)} The variance of the output $\Delta_l$ of $\calA_l$ is at most $2+2S$ when $l = 0$ and $10 \times 2^{-2l}$ when~$l \geq 1$.
    \end{enumerate}
\end{lemma*}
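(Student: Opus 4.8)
## Proof Proposal for the Analysis of the $\calA_l$ Algorithms

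Let me think about how to prove the formal statement of Lemma~\ref{lemma:Al properties}, establishing the cost, bias, and variance properties of $\calA_l$.
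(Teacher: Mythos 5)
Your proposal contains no proof: after the heading you stop at ``Let me think about how to prove\dots'' and never supply an argument for any of the three claims. Every part of the lemma is therefore missing. Concretely, you would need to (i) for the \textbf{cost}, observe that $\calA_l$ invokes $\gen_X$ once and then runs $\calB_l(x)$ and $\calB_{l-1}(x)$ once each, so its cost is at most twice that of $\calB_l$, which is $\wbo{2^l K\sqrt{V}}$ by the cost item of \Cref{lemma:Cl properties}; (ii) for the \textbf{bias}, use that $\ex{\Delta_k} = \ex_X{g(X,\ha{\phi}_k(X))} - \ex_X{g(X,\ha{\phi}_{k-1}(X))}$ for $k \geq 1$, so the sum $\sum_{k=0}^l \ex{\Delta_k}$ telescopes to $\ex_X{g(X,\ha{\phi}_l(X))}$, and then apply the pointwise bias bound $2^{-l}$ from \Cref{lemma:Cl properties} together with Jensen/triangle inequality over $x \sim X$; (iii) for the \textbf{variance}, handle $l=0$ directly via the variance item of \Cref{lemma:Cl properties} (since $\calA_0 = \calB_0(X)$), and for $l \geq 1$ bound $\var{\Delta_l} \leq \ex{\Delta_l^2}$ and expand
\[
\pt[\big]{g(x,\ha{\phi}_l(x)) - g(x,\ha{\phi}_{l-1}(x))}^2 \leq 2\pt[\big]{g(x,\ha{\phi}_l(x)) - G(x)}^2 + 2\pt[\big]{g(x,\ha{\phi}_{l-1}(x)) - G(x)}^2
\]
with $G(x) := g(x,\ex_{Y\mid X=x}{\phi(X,Y)})$, so that the MSE bounds $2^{-2l}$ and $2^{-2(l-1)}$ from \Cref{lemma:Cl properties} give $2(2^{-2l} + 4\cdot 2^{-2l}) = 10 \times 2^{-2l}$. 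The key structural facts you must not skip are the common input $x$ shared by $\calB_l$ and $\calB_{l-1}$ (which is what makes the coupled difference have small second moment) and the telescoping of the expectations. As written, none of this is present, so the proposal cannot be evaluated as a proof.
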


\begin{proof}[Proof of \Cref{lemma:Al properties}]
\noindent{\bf Cost of $\calA_l$.} Clearly implementing $\calA_l$ once costs around twice of the cost of $\calB_l$. 
Therefore the claims follows from \Cref{lemma:Cl properties}.

\noindent{\bf Bias of $\sum_{k=0}^l \Delta_k$.} Since the expectation of $\Delta_k$ when $k \geq 1$ is the difference in expectations between the outputs of $\calB_k(X)$ and $\calB_{k-1}(X)$, we know the expectation of $\sum_{k=0}^l \Delta_k$ is a telescoping sum that is equal to the expectation of the output of the last algorithm $\calB_l(X)$. Therefore, the bias is
\begin{align*}
    \abs[\bigg]{\sum_{k = 0}^l \ex{\Delta_k} - \ex{g(X,\ex{\phi(X,Y)})}} 
    & = \abs[\big]{\ex_{X}{g(X,\ha{\phi}_l(X)) - g(X,\ex_{Y \mid X}{\phi(X,Y)})}} \\
    & \leq \ex_{x \sim X}[\Big]{\abs[\big]{\ex{g(x,\ha{\phi}_l(x))} - g(x,\ex_{Y \mid X = x}{\phi(X,Y)})}} \\
    & \leq 2^{-l}
\end{align*}
where the last inequality follows from the bias established in \Cref{lemma:Cl properties}.

\noindent{\bf Variance of $\Delta_l$.}
The case $l = 0$ follows from \Cref{lemma:Cl properties} and the definition $\calA_0 = \calB_0(X)$.

When $l \geq 1$, we can bound the second moment as follows:
    \begin{align*}
        \ex{\Delta_l^2} 
            & = \ex_{x \sim X}*{\ex*{\pt[\big]{g(x, \ha{\phi}_l(x)) - g(x, \ha{\phi}_{l-1}(x))}^2 }} \\
            & \leq 2  \ex_{x \sim X}*{\ex*{\pt[\big]{g(x, \ha{\phi}_l(x)) - g(x, \ex_{Y \mid X = x}{\phi(X,Y)})}^2}}
            \\
            &~~+ 2\ex_{x \sim X}*{\ex*{\pt[\big]{g(x, \ha{\phi}_{l-1}(x)) -g(x, \ex_{Y \mid X = x}{\phi(X,Y)})}^2}}\\
            & \leq 2 (2^{-2l} + 2^{-{2l+2}}) = 10 \times 2^{-2l}
    \end{align*}
where the last inequality follows from the mean-squared error established in \Cref{lemma:Cl properties}. Since the second moment is always no less than the variance, our claim on $\var{\Delta_l}$ is proven.
\end{proof}

%#########################################################################

\subsection{Analysis of the \textsc{Q-NestExpect-0.8} algorithm}
\label{subsec:proof-nest-0.8}

\begin{proof}[Proof of \Cref{prop:qnest}]
We claim \textsc{Q-NestExpect-0.8} (\Cref{alg:quantum_nested_expectation}) outputs an estimator that is $\eps$-close to $\ex{g(X,\ex{\phi(X,Y)})}$ with probability at least $0.8$. 

For each $l$, our failure probability is no larger than $0.1^{l+1}$. Therefore the probability of at least one failure in the for-loop of \Cref{alg:quantum_nested_expectation} is at most $0.1 + 0.1^2 + ... \leq 0.2$ by union bound. Thus there is a probability at least $1 - (0.1 + 0.1^2 + ... ) \geq 0.8$ such that $\ha{\delta}_l$ is $\eps/(2L+2)$-close to $\ex{\Delta_l}$ for every $l$. Under this (good) event,
our final estimator has error
    \[\abs[\bigg]{\sum_{l=0}^L \ha{\delta}_l  - \sum_{l=0}^L \ex{\Delta_l}} \leq \sum_{l=0}^L \abs[\big]{\ha{\delta}_l - \ex{\Delta_l}} \leq (L+1)\frac{\eps}{2L+2} = \frac{\eps}{2}.\]
At the same time, \Cref{lemma:Al properties} shows $\sum_{l=0}^L \ex{\Delta_l}$ is $2^{-L}$-close to $\ex{g(X,\ex{\phi(X,Y)})}$. Using $L = \ceil{\log_2(2/\eps)}$, we conclude our estimator has error
    \begin{align*}
        & \abs[\bigg]{\sum_{l=0}^L \ha{\delta}_l - \ex{g(X,\ex{\phi(X,Y)})}}
          \leq \abs[\bigg]{\sum_{l=0}^L \ha{\delta}_l  - \sum_{l=0}^L\ex{\Delta_l}} \\
        & \qquad\qquad + \abs[\bigg]{\sum_{l=0}^L\ex{\Delta_l} - \ex{g(X,\ex{\phi(X,Y)})}} \leq \eps,
    \end{align*}
with probability at least $0.8$.

The cost of our algorithm is the summation of the costs of $L+1$ different quantum-accelerated Monte Carlo algorithms. For $l = 0$, since $\calA_0 = \calB_0(X)$, the variance of $\Delta_0$ is at most $2+2S$ by \Cref{lemma:Cl properties}. Thus, \Cref{thm:quantum-mean} shows that the cost of estimating $\ex{\Delta_0}$ with error $\eps/(2L+2)$ is $\wbo{\sqrt{S}L/\eps}$ times the cost $\wbo{K\sqrt{V}}$ of implementing~$\calB_0(X)$, which gives $\wbo{LK\sqrt{SV}/\eps}$. For $l \geq 1$, the variance of~$\Delta_l$ is at most $10 \times 2^{-2l}$ by \Cref{lemma:Al properties}. Hence, the cost of estimating $\ex{\Delta_l}$ is $\wbo{2^{-l}L/\eps}$ times the cost $\wbo{2^l K \sqrt{V}}$ of implementing $\calA_l$, which gives $\wbo{LK\sqrt{V}/\eps}$.

Summing up all the costs together, and using that $L = \bo{\log(1/\eps)}$, yields a total complexity of
    \[\wbo{LK\sqrt{SV}/\eps + L^2K\sqrt{V}/\eps} = \wbo{K\sqrt{SV}/\eps}.\]
\end{proof}

%#########################################################################

\subsection{Median trick and main theorem}
\label{subsec:proof-median}

\begin{proof}[Proof of \Cref{lem:median} (Median trick)]
    Let 
        \[Y_i =
            \begin{cases}
                1, & \text{if } X_i \notin (z-\delta,\,z+\delta),\\[6pt]
                0, & \text{otherwise},
            \end{cases}
        \qquad i=1,\dots,n,
        \]
    and set $S_n = \sum_{i=1}^{n} Y_i$. Because each $X_i$ lies in the interval $(z-\delta,z+\delta)$ with probability at least~$0.8$, we have $\pr{Y_i = 1} \leq 0.2$ and hence
        \[\ex{S_n} = \sum_{i=1}^{n} \ex{Y_i} \leq 0.2n.\]

    The sample median falls \emph{outside} the interval $(z-\delta,z+\delta)$ if and only if more than half of the observations do so, i.e., if $S_n > \frac{1}{2}n$. Consequently,
    \[\pr{\median(X_1,\dots,X_n) \notin (z-\delta,z+\delta)} = \pr[\Big]{S_n > \frac{1}{2}n} \leq \pr{S_n - \ex{S_n} \geq 0.3n}.\]

    Because the $Y_i$ are independent Bernoulli variables, Hoeffding's inequality gives
        \[\pr{S_n - \ex{S_n} \geq t} \leq \exp\pt{-2t^{2}/n},\qquad t > 0.\]
    Taking $t = 0.3n$ yields
        \[\pr{S_n > \frac{1}{2}n} \leq \exp\pt{-2(0.3n)^2 / n} = \exp(-0.18n).\]
    Therefore,
        \begin{align*}
            \pr{\median(X_1,\dots,X_n) \in (z-\delta,z+\delta)} & = 1 - \pr{\median(X_1,\dots,X_n) \notin (z-\delta,z+\delta)} \\ & \geq 1 - \exp(-0.18n),
        \end{align*}
    as claimed.
\end{proof}

The proof of \Cref{thm:our} follows immediately by applying the median trick to the algorithm given in \Cref{prop:qnest}.

\begin{proof}[Proof of \Cref{thm:our}]
    The \textsc{Q-NestExpect} algorithm was defined as,

        \noindent\fbox{\begin{minipage}{\dimexpr\textwidth-2\fboxsep-2\fboxrule\relax}
            Run \textsc{Q-NestExpect-0.8} (\Cref{alg:quantum_nested_expectation}) independently $\ceil{\log(1/\delta)/{0.18}}$ times, and return the median of the resulting outputs.
        \end{minipage}}

    Since the output of \textsc{Q-NestExpect-0.8} is $\eps$-close to \Cref{eqn:nested expectation} with probability at least $0.8$ (\Cref{prop:qnest}), the median trick (\Cref{lem:median}) implies that the output of \textsc{Q-NestExpect} is $\eps$-close to \Cref{eqn:nested expectation} with probability at least $1 - \exp(-0.18 \ceil{\log(1/\delta)/0.18}) \geq 1-\delta$. The cost of \textsc{Q-NestExpect} is also $\ceil{\log(1/\delta)/0.18}$ times the cost of \textsc{Q-NestExpect-0.8}, and is thus $\wbo{K\sqrt{SV}\log(1/\delta)/\eps}$, as claimed.
\end{proof}

%#########################################################################
%#########################################################################

\section{Classical MLMC Parameters}
\label{sec: verify mlmc}

Under Assumptions 1--4 in \Cref{sec:main theorem}, we verify that the classical MLMC sequence defined in \Cref{alg:nested-classicalMLMC-Al} corresponds to $\alpha = 0.5, \beta= \gamma = 1$ in the MLMC framework (\Cref{thm:MLMC}) when $K,V = \bo{1}$ are constant numbers. Firstly, the computational cost of $\calA_l$ is clearly $\bo{2^l}$, thus $\gamma = 1$. To analyze the variance, it is useful to observe
    \[\abs[\bigg]{g\pt[\Big]{X, \frac{1}{2^l}(S_\text{e} + S_{\text{o}})} - \frac{g\pt{X, \frac{1}{2^{l-1}}S_{\text{e}}} + g\pt{X, \frac{1}{2^{l-1}}S_{\text{o}}}}{2}} \leq \frac{K}{2^{l-1}} \abs{S_e - S_o}\]
because of the Lipschitz continuity. Let $Z_i = \phi(X,Y_{2i-1}) - \phi(X,Y_{2i}$),  it is then clear that the $Z_i$'s have zero mean and are conditionally independent given $X$. Further,
    \[\var{Z_{i} \given X} = \var{\phi(X,Y_{2i-1}) \given X} + \var{\phi(X,Y_{2i}) \given X} = 2 \var_Y{\phi(X,Y) \given X} \leq 2V.\]
Therefore
    \[\ex{\abs{S_e - S_o}^2} = \var{S_{e} -S_{o}} = \var*{\sum_{i=1}^{2^{l-1}}Z_{i}} \leq 2^{l} V.\]
Thus
    \begin{align*}
        \ex*{\abs[\bigg]{g\pt[\Big]{X, \frac{1}{2^l}(S_\text{e} + S_{\text{o}})} - \frac{g\pt[\big]{X, \frac{1}{2^{l-1}}S_{\text{e}}} + g\pt[\big]{X, \frac{1}{2^{l-1}}S_{\text{o}}}}{2}}^2}
        & \leq \frac{K^2}{2^{2l-2}} \ex*{\abs{S_e - S_o}^2}\\
        & \leq \frac{K^2}{2^{2l-1}} 2^l V  \\
        & = \frac{2K^2V}{2^l}\\ 
        & = \bo{2^{-l}}.
    \end{align*}
Therefore $\beta = 1$, as claimed. 

Finally we study the bias, which is 
    \[b := \abs*{\sum_{k=0}^l \ex{\Delta_k} - \ex{g(X,\ex{\phi(X,Y)})}}.\]
where $\Delta_0 = g(X,\phi(X,Y))$ and $\Delta_k = g\pt{X, 2^{-k}(S_\text{e} + S_{\text{o}})} - \frac{g\pt{X, 2^{-(k-1)}S_{\text{e}}} + g\pt{X, 2^{-(k-1)}S_{\text{o}}}}{2}$ when $k \geq 1$. We obtain a telescoping sum simplifying to
    \[\sum_{k=0}^l \ex{\Delta_k} = \ex_{X}*{g\pt[\Big]{X, \frac{1}{2^l}(S_\text{e} + S_{\text{o}})}} = \ex_{X,Y_1,\dots,Y_{2^l}}*{g\pt[\Big]{X, \frac{\sum_{k=1}^{2^l} \phi(X,Y_k)}{2^l}}}.\]
Therefore, the bias is at most
    \begin{align*}
        b
        & \leq \ex_{X,Y_1,\dots,Y_{2^l}}*{\abs[\Big]{g\pt[\Big]{X, \frac{\sum_{k=1}^{2^l} \phi(X,Y_k)}{2^l}} - g(X,\ex_{Y \mid X}{\phi(X,Y)})}} \tag{by the triangle inequality}\\
        & \leq K \ex_{X,Y_1,\dots,Y_{2^l}}*{\abs[\Big]{\frac{\sum_{k=1}^{2^l} \phi(X,Y_k)}{2^l}- \ex_{Y \mid X}{\phi(X,Y)}}} \tag{since $g$ is $K$-Lipschitz} \\
        & \leq K\ex_X*{\sqrt{\ex_{Y_1,\dots,Y_{2^l}\mid X}*{\abs[\Big]{\frac{\sum_{k=1}^{2^l} \phi(X,Y_k)}{2^l}- \ex_{Y \mid X}{\phi(X,Y)}}^2}}} \tag{by the Cauchy-Schwarz inequality}\\
        & = \frac{K}{2^{l/2}} \ex_X*{\sqrt{\var_{Y \mid X}*{\phi(X,Y)}}} \\
        & \leq \frac{K\sqrt{V}}{2^{l/2}} \tag{by Assumption 2 in \Cref{sec:main theorem}}
    \end{align*}
Thus, we have $\alpha = 0.5$, as claimed.

%#########################################################################
%#########################################################################

\section{\texorpdfstring{Lower Bound for the QA-MLMC Algorithm of \cite{an2021quantum}}{Lower Bound for the QA-MLMC Algorithm}}
\label{sec:lowerbound}

We demonstrate \Cref{prop:lowerbound}, namely that the QA-MLMC algorithm of \cite{an2021quantum} is optimal (up to polylogarithmic factors) for solving the problem stated in \Cref{thm:MLMC} when the sequence $\{\calA_l\}_{l \geq 0}$ is arbitrary. We consider two parameter regimes for $(\alpha, \beta, \gamma)$ separately.

\paragraph{Regime $\beta \leq 2\gamma$.}

In that regime, the complexity of the quantum MLMC algorithm of \cite{an2021quantum} is $\wbo{\eps^{-1-(\gamma-0.5\beta)/\alpha}}$. We observe that the most expensive part of the computation in this algorithm occurs at the last level $L = \bo{\log(1/\eps)/\alpha}$, where the algorithm uses $\wbo{\eps^{-1+0.5\beta/\alpha}}$ samples, each with cost $C_L = \wbo{\eps^{-\gamma/\alpha}}$. We take inspiration from this behavior to design a hard instance for the lower bound. The lower bound proceeds by a reduction to the following query problem:

\begin{definition}[$\textsc{Count}^{t,n} \circ \textsc{Parity}^{m}$]
  Fix three integers $t,n,m$ with $t \leq n$. Define two sets of Boolean matrices $\bfM_0,\bfM_1 \subset \rn^{n \times m}$ such that:
   \begin{itemize}
     \item $M \in \bfM_0$ iff each row of $M$ has parity $0$ (i.e., $\sum_j M_{i,j} = 0 \mod 2$ for all $i \in [n]$),
     \item $M \in \bfM_1$ iff exactly $t$ rows of $M$ have parity $1$.
   \end{itemize}
  In the $\textsc{Count}^{t,n} \circ \textsc{Parity}^m$ problem, one is given access to a quantum oracle $\ket{i,j} \mapsto (-1)^{M_{i,j}} \ket{i,j}$ for an unknown $M \in \bfM_0 \cup \bfM_1$, and the goal is to decide to which set $M$ belongs to.
\end{definition}

By composition properties of the quantum adversary method, the quantum query complexity of $\textsc{Count}^{t,n} \circ \textsc{Parity}^m$ is the product of the complexity of the counting problem $\textsc{Count}^{t,n}$ (which is $\om{\sqrt{n/t}}$) and of the parity problem $\textsc{Parity}^m$ (which is $\om{m}$). We refer the reader to \cite[Lemma 11]{vApe21c} for a more detailed version of this argument.

\begin{proposition}
  \label{Prop:cp}
  Any quantum algorithm that solves the $\textsc{Count}^{t,n} \circ \textsc{Parity}^m$ problem with success probability at least $2/3$ must make $\om{m\sqrt{n/t}}$ queries to $M$.
\end{proposition}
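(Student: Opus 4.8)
The plan is to establish Proposition~\ref{Prop:cp} by invoking the composition property of the quantum adversary method, reducing the lower bound to two known building blocks: the query complexity of approximate counting and the query complexity of parity. First I would recall that the \emph{general} (negative-weight) adversary bound $\mathrm{ADV}^{\pm}$ is a \emph{tight} lower bound on bounded-error quantum query complexity for any Boolean-valued function, and that it satisfies the perfect composition identity $\mathrm{ADV}^{\pm}(f \circ g^n) = \mathrm{ADV}^{\pm}(f) \cdot \mathrm{ADV}^{\pm}(g)$ when $g$ is the inner function replicated on disjoint blocks of input. Here the outer function is $\textsc{Count}^{t,n}$, a (promise) Boolean function on $n$ bits that distinguishes Hamming weight $0$ from Hamming weight $t$, and the inner function is $\textsc{Parity}^m$ on $m$ bits; the composed function is exactly $\textsc{Count}^{t,n}\circ\textsc{Parity}^m$, since feeding each row $M_i \in \{0,1\}^m$ through $\textsc{Parity}^m$ produces the row-parity bit, and the promise on $M$ translates precisely into the promise ($0$ ones vs.\ $t$ ones) for the outer counting function.

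The next step is to quote the two component bounds. For parity, $\mathrm{ADV}^{\pm}(\textsc{Parity}^m) = \Omega(m)$ — in fact $\textsc{Parity}^m$ requires exactly $\lceil m/2 \rceil$ queries (Beals et al.), and this is also captured by the adversary bound. For approximate counting, distinguishing weight $0$ from weight $t$ among $n$ bits has quantum query complexity $\Theta(\sqrt{n/t})$ (Nayak--Wu / Brassard--Høyer--Mosca--Tapp), and this bound is likewise witnessed by $\mathrm{ADV}^{\pm}$, so $\mathrm{ADV}^{\pm}(\textsc{Count}^{t,n}) = \Omega(\sqrt{n/t})$. Multiplying, $\mathrm{ADV}^{\pm}(\textsc{Count}^{t,n}\circ\textsc{Parity}^m) = \Omega(m\sqrt{n/t})$, and since $\mathrm{ADV}^{\pm}$ lower-bounds bounded-error quantum query complexity, any algorithm solving $\textsc{Count}^{t,n}\circ\textsc{Parity}^m$ with success probability $\geq 2/3$ must make $\Omega(m\sqrt{n/t})$ queries to $M$. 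I would then remark that the oracle in the statement is the phase oracle $\ket{i,j}\mapsto(-1)^{M_{i,j}}\ket{i,j}$, which is equivalent up to a constant factor (in fact exactly one extra query, or zero with a standard Hadamard-on-ancilla trick) to the standard bit oracle that the adversary framework assumes, so the reduction loses nothing.

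The main obstacle — and the reason to point the reader to \cite[Lemma 11]{vApe21c} rather than redo it — is handling the \emph{promise} structure correctly in the composition theorem. The clean composition identity for $\mathrm{ADV}^{\pm}$ is usually stated for total Boolean functions; here both the outer function ($\textsc{Count}^{t,n}$ is only defined on strings of weight $0$ or $t$) and the composed object are partial. One must verify that the composition lower bound still goes through for partial functions with a product-form promise, which it does: the adversary matrix for the composed partial function can be built as a tensor/amplification construction from the adversary matrices of the (partial) outer and (total) inner functions, and the relevant spectral bounds multiply. I would state this carefully, cite the composition theorem in the form valid for partial functions (e.g.\ Lee--Mittal--Reichardt--{\v S}palek--Szegedy, or the version used in \cite{vApe21c}), and note that $\textsc{Parity}^m$ being total means no subtlety arises on the inner side. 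Everything else is routine bookkeeping.

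\begin{proof}[Proof sketch of \Cref{Prop:cp}]
  The composed function $\textsc{Count}^{t,n}\circ\textsc{Parity}^m$ is the Boolean function that first computes, for each row $i \in [n]$, the parity $p_i := \bigoplus_{j \in [m]} M_{i,j}$, and then decides whether the weight-$n$ string $(p_1,\dots,p_n)$ has Hamming weight $0$ or $t$; this is well-defined under the stated promise $M \in \bfM_0 \cup \bfM_1$. By the composition property of the general (negative-weight) adversary bound $\mathrm{ADV}^{\pm}$, which holds for promise functions with product-form promises (see \cite[Lemma 11]{vApe21c} and the references therein), we have
    \[\mathrm{ADV}^{\pm}\!\big(\textsc{Count}^{t,n}\circ\textsc{Parity}^m\big) \;=\; \mathrm{ADV}^{\pm}\!\big(\textsc{Count}^{t,n}\big)\cdot \mathrm{ADV}^{\pm}\!\big(\textsc{Parity}^m\big).\]
  The parity function on $m$ bits has $\mathrm{ADV}^{\pm}(\textsc{Parity}^m) = \Omega(m)$, and approximate counting distinguishing weight $0$ from weight $t$ among $n$ bits has $\mathrm{ADV}^{\pm}(\textsc{Count}^{t,n}) = \Omega(\sqrt{n/t})$. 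Hence the composed bound is $\Omega(m\sqrt{n/t})$. Since $\mathrm{ADV}^{\pm}$ is a lower bound on bounded-error quantum query complexity, and the phase oracle $\ket{i,j}\mapsto(-1)^{M_{i,j}}\ket{i,j}$ is equivalent (up to a constant factor) to the standard oracle, any quantum algorithm solving the problem with success probability at least $2/3$ must make $\Omega(m\sqrt{n/t})$ queries to $M$.
\end{proof}
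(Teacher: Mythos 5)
Your proposal is correct and follows essentially the same route as the paper: both invoke the composition property of the quantum adversary bound to multiply the $\om{\sqrt{n/t}}$ lower bound for $\textsc{Count}^{t,n}$ by the $\om{m}$ lower bound for $\textsc{Parity}^m$, and both defer the details to \cite[Lemma 11]{vApe21c}. Your additional remarks on handling the promise structure and the phase-oracle equivalence are sensible elaborations of the same argument rather than a different approach.
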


We now explain how to construct a hard instance for the MLMC problem out of an input to the $\textsc{Count}^{t,n} \circ \textsc{Parity}^m$ problem. Fix $(\alpha,\beta,\gamma)$ such that $\alpha \geq \frac12 \max\set{\gamma,\beta}$ and $\beta \leq 2\gamma$. Choose $t,n,m$ such that $m = C_L = \eps^{-\gamma/\alpha}$ and $t/n = \eps^2/V_L = \eps^{2-\beta/\alpha}$. Given $M \in \bfM_0 \cup \bfM_1$, define the random variable $P_M \in \set{0,\eps^{\beta/\alpha-1}}$ obtained by sampling $i \in [n]$ uniformly at random and setting
    \[
    P_M =
    \left\{\begin{array}{ll}
            0, & \text{if $\textsc{Parity}(M_i) = 0$,}\\
            \eps^{\beta/\alpha-1}, & \text{if $\textsc{Parity}(M_i) = 1$.}
            \end{array}\right.
    \]
We consider the problem of estimating the expectation $\mu = \ex{P_M}$ with error $\eps/2$. Observe that~$\mu = 0$ if $M \in \bfM_0$ and $\mu = t/n \eps^{\beta/\alpha-1} = \eps$ if $M \in \bfM_1$. Hence, estimating $\mu \pm \eps/2$ allows to solve the $\textsc{Count}^{t,n} \circ \textsc{Parity}^m$ problem on $M$. By Proposition~\ref{Prop:cp}, this cannot be done faster than with $\om{m\sqrt{n/t}} = \om{\eps^{-1-(\gamma-0.5\beta)/\alpha}}$ quantum queries.

It remains to construct a sequence of random variables $\Delta_l$ with parameters $(\alpha,\beta,\gamma)$ that can approximate $P_M$. All random variables are chosen to be zero, except $\Delta_L$ which is chosen to be $P_M$ (where $L = \log(1/\eps)/\alpha$). The cost to compute $\Delta_L$ (naively) is $C_L = m = \eps^{-\gamma/\alpha} = 2^{\gamma L}$. The variance is $\var{\Delta_L} = \var{P_M} \leq \ex{P_M^2} \leq \eps^{\beta/\alpha} = 2^{-\beta L}$. The cost and variance of $\Delta_l$ are zero when $l \neq L$. Finally, for any $l$ we have,
    \[
    \abs*{\sum_{k=0}^l \ex{\Delta_k} - \mu} =
    \left\{\begin{array}{ll}
            0, & \text{if $l \geq L$,}\\
            \mu \leq \eps \leq 2^{-\alpha l}, & \text{if $l < L$.}
            \end{array}\right.
    \]

Hence, we have constructed a sequence of algorithms $\{\calA_l\}_{l \geq 0}$ satisfying the conditions listed in \Cref{thm:MLMC}, but for which no quantum algorithm can solve the estimation problem using fewer than $\om{\eps^{-1-(\gamma-0.5\beta)/\alpha}}$ quantum queries.

\paragraph{Regime $\beta \geq 2\gamma$.}

In that regime, the complexity of the quantum MLMC algorithm of \cite{an2021quantum} is $\wbo{\eps^{-1}}$.
The most expensive part of the computation occurs at the first level $l = 0$, where the algorithm uses~$\wbo{\eps^{-1}}$ samples, each with cost $C_L = \wbo{1}$. This is trivial to adapt into a lower bound (in fact, it applies to the regime $\beta \leq 2\gamma$ as well). We know that for standard Monte Carlo, there exists a quantum query problem \citep{nayak1999quantum,hamoudi2021gaussian} for which generating a certain random variable $X$ has cost $\bo{1}$ and variance $\var{X} = \bo{1}$, and the quantum complexity of estimating $\ex{X} \pm \eps$ is $\om{1/\eps}$. This can be cast as an MLMC problem by considering only the first level: $\Delta_0 = X$. We have $\abs[\big]{\sum_{k=0}^l \ex{\Delta_k} - \ex{X}} = 0$ for all $l$, and $C_l = V_l = 0$ for all $l > 0$. Since this sequence applies to any values of $(\alpha,\beta,\gamma)$, we have a generic lower bound of $\om{1/\eps}$.

\end{document}